\pgfplotsset{compat=newest}
\pgfplotsset{height=5.5cm}
\pgfplotsset{width=7cm}
\newtheorem{statement}{Statement}
\newtheorem{lemma}{Lemma}
\newcommand{\compactpar}[1]{
\smallskip {\noindent \textbf{#1}}\ \\ }
\newcommand{\term}{\ensuremath{\mathbf{V}_t}}
\newcommand{\nterm}{\ensuremath{\mathbf{V}_n}}
\newcolumntype{d}[1]{D{.}{.}{#1}}
\lstdefinelanguage{asn.1}
{keywords=%
{DEFINITIONS,ENCODED,CONTAINING,SEQUENCE,OCTET,INTEGER,BEGIN,END,OPTIONAL,
STRING,BIT,OCTET,UNIVERSAL,DEFAULT,SET,BOOLEAN,EXPLICIT,IMPLICIT,OBJECT,
IDENTIFIER,ANY,DEFINED,BY,CHOICE,UTCTime,GeneralizedTime,SIZE,(1..MAX),OF,
IA5String,VisibleString,BMPString,PrintableString,UTF8String,TeletexString},
sensitive=true,
comment= [l]{--},
}[keywords]
\lstdefinelanguage{grammar}
{morecomment = [s]{\{}{\}},
morestring = [b]',
literate=%
{->}{ $\rightarrow$ }3%
{eps}{$\varepsilon$}1
{|[}{{\texttt{|$\!\!$[}}}1%
{]|}{{\texttt{]$\!\!$|}}}1%
{->}{$\rightarrow$ }2%
{=>}{$\Rightarrow$ }2,
morekeywords = {[2]Certificate}}
\ttfamily\fontsize{7.5}{4},
\title{Systematic Parsing of X.509:\\ 
       Eradicating Security Issues with a Parse Tree
      }
\runningtitle{Systematic Parsing of X.509}
\author{Alessandro Barenghi\Affiliation{1} \\
	alessandro.barenghi@polimi.it
	\And
	Nicholas Mainardi\Affiliation{1} \\
	nicholas.mainardi@polimi.it
	\And
	Gerardo Pelosi\Affiliation{1} \\
	gerardo.pelosi@polimi.it		
	\StartInstitutes
	\Institute{1}{Department of Electronics, Information and Bioengineering 
		\\ Politecnico di Milano, IT}
	\EndInstitutes
}
\begin{document}
\maketitle
\begin{abstract}
X.$509$ certificate parsing and validation is a critical task which has 
shown consistent lack of effectiveness, with practical attacks being reported 
with a steady rate during the last $10$ years.
In this work we analyze the X.$509$ standard and provide a grammar description
of it amenable to the automated generation of a parser with strong 
termination guarantees, providing unambiguous input parsing.
We report the results of analyzing a $11$M X.$509$ certificate dump of the HTTPS 
servers running on the entire IPv$4$ space, showing that $21.5$\% of the 
certificates in use are syntactically invalid.
We compare the results of our parsing against $7$ widely used TLS libraries
showing that $631$k to $1,156$k syntactically incorrect certificates are 
deemed valid by them ($5.7$\%--$10.5$\%), including instances with security 
critical mis-parsings.
We prove the criticality of such mis-parsing exploiting one of the syntactic 
flaws found in existing certificates to perform an impersonation attack.
\end{abstract}
%
% ---------------------------------------------------------------------------- %
%
%\begin{keyword}
%\kwd{Digital Certificates} 
%\kwd{X.$509$}
%\kwd{Transport Layer Security}
%\kwd{TLS}
%\kwd{Parsing} 
%\kwd{Security Vulnerabilities}
%\end{keyword}
\keywords{Digital Certificates \and X.$509$ \and Transport Layer Security \and 
	TLS \and Parsing \and Security Vulnerabilities}
%
% ---------------------------------------------------------------------------- %
%
%\end{frontmatter}
%
% ---------------------------------------------------------------------------- %
%
\section{Introduction}\label{sec:intro}
Digital certificates are the mainstay of public key authentication in secure
network communications since the introduction of the Secure Sockets Layer (SSL) 
and Transport Security Layer (TLS) protocols.
The requirement for interoperability called for the adoption of a standardized
format specifying both the information which should be contained in them and
their encoding.

Such a standardization action took place within the X.$500$ standard series
by the International Telecommunication Union (ITU), and resulted in the
X.$509$ standard for digital certificates. 
The standard has been further described in a series of Request for Comments 
(RFCs) memoranda~\cite{RFC3279,RFC4055,RFC4491,RFC5280,RFC5758,RFC5480}
by the Internet Engineering Task Force (IETF) and has grown to a considerable 
complexity since its first version designed in $1988$. 
As a consequence, performing sound validation of the contents of an X.$509$ 
certificate has become a non trivial and security critical task. 
Indeed, X.$509$ certificate validation has shown consistent lack of 
effectiveness in implementations due to the large number of constraints 
to be taken into account and the complexity of the involved data structure.
Practical attacks against the X.$509$ certificate validation have
been pointed out for the last $10$ years, leading to effective impersonations 
against TLS/SSL enabled software.

Some among the most renown security issues involve certificates which are 
deemed valid to be binding a public key to the identity of a Certification
Authority (CA), while such an information is contradicted either by the values 
contained in the certificate~\cite{marlinspike02} or by misinterpretations in 
the subject name contained in it~\cite{marlinspike09}, both leading to 
effective impersonation of an arbitrary identity.
More recently, in~\cite{DBLP:conf/fc/KaminskyPS10} it was shown that 
inconsistent validations were performed by different TLS libraries, 
due to integer overflows in the recognition of some X.$509$ certificate fields,
providing ground for attacks.
Broken certificates are common even among the Alexa top $1$M visited 
sites~\cite{DBLP:conf/weis/VratonjicFBH11}, and the diversity in the 
Application Program Interface (API) exposed by the existing TLS/SSL libraries 
was proven a further source of security 
issues~\cite{DBLP:conf/ccs/GeorgievIJABS12}.
The latest among the reported issues on X.$509$ validation shows that, due
to a misinterpretation issue of the encoding, it was effectively possible to 
get certificates with forged signatures 
accepted~\cite{BERserk-Sketch,BERserk-NSSattack}.

An interesting point to be noted is that all the aforementioned issues do not
stem from a cryptographic vulnerability of the employed primitives, but rather
from a non systematic approach to the syntactic recognition of the certificate.
Indeed, mainly due to the high complexity of the data format, no methodical 
approach at content format recognition and syntactic verification, i.e., 
\textrm{parsing}, has been either proposed or employed in the use of existing
 X.$509$ digital certificates.   
All the existing available implementations dealing with X.$509$ certificates
employ ad-hoc handcrafted code to parse the certificate contents, in turn 
resulting in software artifacts which are difficult to test for correctness.
A practical validation of such issue is reported 
in~\cite{DBLP:conf/sp/BrubakerJRKS14} where the authors employed a tool to 
generate pseudo-random X.$509$ certificates obtained by splicing a set of valid 
ones and inserting intentional errors.

The results of testing common SSL/TLS implementations against such datasets, 
instead of purely random inputs alone, helped to uncover a significant number
of incorrect recognition issues.
However, such an approach does not provide a constructive guarantee
that a parsing strategy for X.$509$ certificates is indeed sound in its action.

A classical and sound approach to the syntactic validation of a given input 
format is the one regarding it as the problem of parsing a language specified 
by a given grammar.
Such an approach allows to provide a synthetic description of the format to be
recognized from which an implementation of the actual parser can be 
automatically derived, minimizing the implementation effort and providing
guarantees on the correctness of the recognition action.
Despite the aforementioned advantages of tackling the issue by means of a 
language theoretic approach, such a strategy has never been employed 
to parse existing X.$509$ certificates.
We ascribe this both to the complexity of the X.$509$ standard when considered
as a language over an arbitrary byte alphabet and to the fact that it was 
pointed out in~\cite{DBLP:conf/fc/KaminskyPS10} that X.$509$ is at least 
{\em context sensitive}, a feature preventing the use of most parser 
generation techniques. An interesting proposal for a new regular 
format is presented in~\cite{bmp2018itng}.
{We report that a language theoretic analysis similar to the one we 
present for X.$509$ was performed on the OpenPGP message format 
in~\cite{bmp2017fcst}.

\compactpar{Contributions.}
 In this work, we tackled the task of analyzing the standard specification of 
X.$509$ certificates from a language theoretic standpoint, highlighting which
ones of its features present a hindrance for a decidable and effective parsing.
After performing a critical analysis of the X.$509$ standard, which is specified
as a combination of Abstract Syntax Notation $1$ 
(ASN.$1$, see ITU-R X.$680$)~\cite{Xseries} data type descriptions and natural 
language defined constraints, we designed a predicated grammar defining a 
language which contains all the X.$509$ certificates employing standardized 
algorithms and smaller than $4$ GiB.

We automatically derive a parser for the said X.$509$ language from the 
aforementioned grammar specification, employing the ANTLR parser 
generator~\cite{DBLP:conf/pldi/ParrF11}.
We validate the parsing capability of our parser analyzing $11$M X.$509$ 
certificates obtained as a survey of servers running the HyperText Transfer 
Protocol over Secure Socket Layer (HTTPS) on the entire space of the 
Internet Protocol version $4$ (IPv$4$), showing that $\approx$$21.5$\% of 
the X.$509$ certificates are syntactically incorrect.

We compare our syntactic recognition capabilities against the certificate 
validation performed by $7$ widely employed TLS libraries, and show how a 
significant number of X.$509$ certificates which are 
syntactically invalid are not recognized as such by all of the current 
TLS libraries.
We validate the criticality of the detected flaws exploiting one of them to lead
a successful impersonation attack against the OpenSSL and BoringSSL libraries.
%
% ---------------------------------------------------------------------------- %
%
\section{Preliminaries}\label{sec:background}
In this section we report the notions needed to describe a digital certificate 
compliant with the X.$509$ standard in a form amenable to an automated parser 
generation.
In particular, we recall both the key properties of formal languages and the 
efficiency of their recognizers.  
Subsequently, since the X.$509$ specification is given employing the  
Abstract Syntax Notation $1$ (ASN.$1$) meta-syntax, we provide a bridge 
among its notation and the usual formal grammar one. 

\subsection{Parser Generation}
Given a finite set of symbols \term, known as \emph{alphabet}, a language 
$\mathbf{L}$ is defined as a set of elements, named \emph{sentences},
obtained by concatenating zero or more symbol of \term. 
Conventionally, the \emph{empty sentence} is denoted as $\varepsilon$, 
while a portion of a sentence is known as a \emph{factor}.

Given a language it is also possible to describe it via a generative formalism, 
i.e., a \emph{grammar}. 
A grammar is a quadruple $G : (\mathbf{V}_t, \mathbf{V}_n, \mathbf{P}, S)$, 
with $\mathbf{V}_t$ the alphabet of the generated language, 
$\mathbf{V}_n$ a finite set of symbols named \emph{nonterminals}, 
$\mathbf{P}$ a set of {\em productions} defined as pairs of strings obtained by 
concatenating elements of $\mathbf{V}$$=$$\mathbf{V}_t$$\cup$$\mathbf{V}_n$, 
and $S$$\in$$\mathbf{V}_n$ the {\em axiom} of the grammar.
The productions are denoted as usual as $\alpha$$\to$$\beta$; $\alpha$$\in$$
\nterm^+$, $\beta$$\in$$\mathbf{V}^*$, where $\mathbf{V}^+$ represents the 
closure with respect to concatenation over $\mathbf{V}$, and 
$\mathbf{V}^*$$=$$\mathbf{V}^+$$\cup$$\{\varepsilon\}$.
A sentence of a language $\mathbf{L}$ is generated by the grammar $G$ if it 
is possible to obtain it starting from a string $u$$\in$$\mathbf{V}^*$ made of 
the grammar axiom alone (i.e., the nonterminal $S$), and iteratively 
substituting a portion of $u$ appearing as the left hand side of a production 
in $\mathbf{P}$ with the right hand side of the said production, until the 
result is made only of elements of \term.
The grammar is said to be \emph{ambiguous}, if it is possible to generate a 
sentence of the language through two different sequences of substitutions.
Languages are classified according to the  
Chomsky hierarchy~\cite{Harrison:1978:IFL:578595}, depending on which
constraints are holding on the productions of the grammar generating them. 
Language families generated from grammars with stricter constraints are  
included in all the families with weaker constraints.
In the following, we describe the language families starting from the least 
constrained one. 
Grammars with no restrictions on the left and right sides of its productions are 
denoted as {\em unrestricted} grammars, and generate \emph{context sensitive 
with erasure} (CS-E) languages. Determining if a generic string over \term\ 
belongs to a CS-E language is not decidable.

Restricting the productions of a grammar so that the sequence of substitutions 
does not allow the erasure of symbols yields \emph{context sensitive} (CS) 
grammars, which generate CS languages.
Determining if a generic string over \term\ belongs to a CS language 
is decidable, i.e., it is always possible to state whether a string over \term\ 
can be generated by a CS grammar. 
Restricting the productions of a grammar to have a single element of \nterm\ on
their left hand side yields the so-called \emph{context-free} (CF) grammars, 
which generate CF languages. 
Deterministic CF (DCF) languages can be recognized by a Deterministic PushDown
Automaton (DPDA) and their generating grammars constitute a proper subset of 
non ambiguous CF grammars. 
Finally, restricting the productions of a grammar either to the ones of the form
$\{A$$\to$$a, A$$\to$$aA\}$ or to the ones of the form 
$\{A$$\to$$a, A$$\to$$Aa\}$, where $A$$\in$\nterm, 
$a$$\in$$\term$$\cup$$\{\varepsilon \}$, yields \emph{right-} or 
\emph{left-linear} grammars, which generate \emph{regular} (REG) languages.
We will denote such grammars as \emph{linear} whenever the direction of the 
recursion in their productions is not fundamental.

Given a grammar $G$ generating the language $\mathbf{L}$ and a generic terminal 
string $w$$\in$$\term^*$, the process of {\em parsing} consists both in 
determining if $w$$\in$$\mathbf{L}$ and in computing the sequence(s) of 
applications of the productions of $G$ which generate(s) $w$. 
The sequence(s) of productions are depicted as a tree, 
called either {\em parse tree} or {\em abstract syntax tree}, with the axiom 
of the grammar being the root and the terminal symbols being the leaves.
There exists a straightforward parsing algorithm of a CS language
which is complete and correct, but its running time is exponential in the length
of the input. As a consequence, a number of higher efficiency automated parser
generation techniques were defined for subsets of the CS grammar family. 
CF grammars are the mainstay of parser generation as it is possible to 
automatically generate a recognizer automaton for any language generated by 
them.
In particular, given a generic DCF grammar it is possible to automatically 
generate a deterministic parser for the strings of the corresponding 
language~\cite{Harrison:1978:IFL:578595}; such parser enjoys worst-case 
linear space and time requirements in the length of input. 
Linear grammars are optimal from a parsing standpoint, as it is 
possible to derive from them a parser which runs with constant 
memory and linear time requirements in the length of input.
Moreover, a parser for a REG language can be generated  
starting from the definition of a {\em regular expression}~\cite{BerryS86}. 

We aim at obtaining a grammar definition for the X.$509$ language
such that the parsing is decidable. 
In addition, we require that such grammar is expressed
as much as possible in terms of right-linear productions so that the parser will 
only require a constant amount of memory (besides the one to contain the input). 
In the remainder of this work, the terminal alphabet $\mathbf{V}_t$ will be 
the set of $256$ values which can be taken by a byte, unless pointed out 
otherwise.
Each terminal symbol will be denoted by two hexadecimal digits 
(e.g., the byte taking the decimal value $42$ will be denoted as \texttt{2A}).
We will also employ the Extended Backus-Naur Form (EBNF) to write the 
right hand sides of grammar productions.

\subsection{ASN.1 Overview}\label{subsec:asn1}
The ASN.$1$ meta-syntax was introduced by the ITU to specify structured data 
as abstract data types (ADTs), and it is described in the set of ITU 
Recommendations (ITU-R) X.$680$-X.$683$~\cite{Xseries}.
The encoding schemes for the ADTs are specified in 
ITU-R X.$690$--X.$696$~\cite{Xseries}. 
The ASN.$1$ provides the means to express both the syntax of the ADT at hand, 
in a form akin to a grammar, and some semantic constraints
concerning the values taken by an instance of such an ADT.
As our systematic parsing strategy will rely on a parser generated from a 
grammar specification for the X.$509$ ADT, we first provide a mapping between 
the syntactic-structure-specifying keywords of ASN.$1$ and the corresponding 
productions of an EBNF grammar.
Subsequently, we highlight how the remaining, non purely syntactic features of 
ASN.$1$ act as constraints on the language of the instances of the described ADT 
in terms of semantics and describe the meta-language facilities exposed to ease 
the definition of a non ambiguous specification.

\compactpar{Syntactic Elements.} 
 An ASN.$1$ ADT can be regarded as a construct equivalent to a single EBNF 
grammar generating all the possible concrete data type instances as 
its language.
The user-defined name of the ADT corresponds to the axiom of the EBNF grammar, 
while the productions are represented as structured data definitions with 
the \texttt{::=} operator separating the left and right hand side, in lieu of 
the common $\to$.

The structure of an ADT may either be a single element, in case the type is
\emph{primitive}, or a composition of other types employing ASN.$1$ constructs 
in case it is \emph{constructed}.
\begin{table}[!t]
\centering
\caption{Equivalence between the notation of the ASN.$1$ right hand side of a 
constructed type definition and the right hand side of an EBNF grammar 
production, with a set of sample ASN.$1$ types \texttt{u,v,x} matching
the identically named strings $u,v,x$$\in$$(\term$$\cup$$\nterm)^*$ in EBNF}
\label{tab:constructed types}
{\small
\begin{tabular}{l|l}
\toprule
\multicolumn{1}{c|}{ASN.$1$ right hand sides}                                    
&
\multicolumn{1}{c}{EBNF right hand sides}\\
\midrule                                                                    
{\small \texttt{SEQUENCE \{u,v,x\}          }} & $uvx$                     \\
{\small \texttt{CHOICE \{u,v,x\}            }} & $u | v | x$               \\
{\small \texttt{u OPTIONAL                  }}       & $u | \varepsilon$   \\
{\small \texttt{SET \{u,v,x\}               }} & $uvx|uxv|vux|vxu|xuv|xvu$ \\
{\small \texttt{SEQUENCE OF u               }} & $u^*$                   \\
{\small \texttt{SET OF u                    }} & $u^*$                   \\[2pt]
{\small \texttt{SEQUENCE SIZE(1 .. MAX) OF u}} & $u^+$                   \\
{\small \texttt{SET SIZE(1 .. MAX) OF u     }} & $u^+$                   \\
{\small \texttt{u(2 .. N)                 }} & $u^2 | u^3 | \ldots | u^N$\\[2pt]
{\small \texttt{ANY                       }} & Arbitrary definition      \\
{\small \texttt{ANY DEFINED BY u          }} & Arbitrary ASN.$1$ definition\\
\bottomrule
\end{tabular}
}
\end{table}

Primitive types in ASN.$1$ represent terminal rules of an EBNF grammar, 
i.e., rules where $\alpha$$\to$$\beta$, $\alpha$$\in$$\nterm^+, 
\beta$$\in$$\term^*$. 
The right hand side of a primitive type definition is described by a single
line ended by a specific keyword 
(e.g., \texttt{INTEGER}, \texttt{BOOLEAN,} \texttt{OBJECT IDENTIFIER}, 
\texttt{OCTET STRING}, \texttt{BIT STRING}), specifying completely its nature.
An \texttt{OBJECT IDENTIFIER} is a sequence of decimal numbers separated 
by dots, of which a single decimal number is known as an \emph{arc}.

Constructed types may either have a single user-defined name appearing on the
right hand side of their definition, in which case they act as the copy rules
of an EBNF grammar (i.e., rules where $\alpha$$\to$$\beta,
\alpha,\beta$$\in$$\nterm$), or an arbitrary ASN.$1$ syntactic construct may
be used.
The only exception to the aforementioned constructed type definition is
the possibility of turning a primitive \texttt{OCTET STRING} or \texttt{BIT 
STRING} type into a constructed one through appending the keyword 
\texttt{CONTAINING} to it, followed by the description of its contents.
Deriving a constructed type from an \texttt{OCTET STRING} forces its value in an 
ADT instance to be byte-aligned, and allows the designer to enforce the choice 
of the encoding rules to be employed for it with the \texttt{ENCODED BY} 
keywords followed by the encoding name (see ITU-R X.$682$)~\cite{Xseries}.
Table~\ref{tab:constructed types} shows a mapping between the ASN.$1$ 
structures appearing on the right hand side of the data types which are  
considered in this work and their matching EBNF notation, expressed 
with a set of sample ASN.$1$ types \texttt{u,v,x} and identically named 
strings $u,v,x$$\in$$(\term$$\cup$$\nterm)^*$ in EBNF. 

The ASN.$1$ notation specifies the common concatenation and alternative
choice via the \texttt{SEQUENCE} and \texttt{CHOICE} keywords, respectively, 
while the syntactic constraint indicated by the \texttt{SET} keyword 
mandates that a set of ADTs may appear in any order, without repetitions. 
Postfixing an ADT appearing in a right hand side of a definition with the 
\texttt{OPTIONAL} keyword allows it to be either missing or present only once.
Given an ASN.$1$ ADT \texttt{u}, the syntactic constraints imposed by the 
\texttt{SEQUENCE OF} and \texttt{SET OF} constructs, indicate a concatenation of
zero or more instances of \texttt{u}, matching the star operator in EBNF.
The ternary range operator in ASN.$1$, having the syntax 
\texttt{t(low .. high)}, where \texttt{t} is an ADT and \texttt{low,high} 
the range boundaries, is employed with two purposes: specifying the range of 
possible values of the instances of the primitive ADT to which they are 
appended, or indicating the concatenation of any number 
\texttt{low}$<$$n$$<$\texttt{high} of instances of the user-defined ADT 
preceding them.
The bounds of a range operator are either constant values or the keywords 
\texttt{MIN} and \texttt{MAX}, which indicate that the minimum (resp., maximum) 
of the given range, is interpreted as the smallest (resp., greatest) value which 
can be taken by the ADT on their left.
ASN.$1$ allows to specify \emph{size constraints} through the use of the
keyword \texttt{SIZE}, followed by a range of allowed sizes.
A common idiom in ASN.$1$ ADT declarations is to employ \texttt{MAX} as an 
upper bound for \texttt{SIZE}s to indicate that there is no upper bound on 
the size.
As a consequence, the two common ASN.$1$ \texttt{SEQUENCE SIZE(1 .. MAX) OF} 
and \texttt{SET SIZE(1 .. MAX) OF} idioms in Table~\ref{tab:constructed types} 
are representing a concatenation of one or more instances of the involved ADT  
\texttt{u}, corresponding to the EBNF cross construct. 
Finally, the ASN.$1$ keyword \texttt{ANY} allows to delegate the definition of 
the structure of a given ADT to another document, potentially not expressed in 
ASN.$1$.
Specifying further the effect of the \texttt{ANY} construct with the 
\texttt{DEFINED BY} keywords, followed by the name of an ADT, enforces the
fact that the specification should be expressed in ASN.$1$.

\compactpar{Semantic Elements and Disambiguation Constructs.} 
 ASN.$1$ allows to describe semantic information 
on instances of an ADT either specifying a constant value for a given 
{\em primitive type} element, appending such value between round brackets on 
the line where the type appears, or specifying a so-called \texttt{DEFAULT} 
value.
Appending the \texttt{DEFAULT} keyword allows to indicate that, in case an 
element is missing in an instance of the ADT, the recognizer should assume its 
presence nonetheless, and assign the semantic value present in the ASN.$1$ 
specification to it.
The expressive power of the ASN.$1$ allows the designer to specify an ADT 
corresponding to an \emph{inherently ambiguous} language, i.e., a language for 
which no unambiguous grammar exists.
An illustrative example of such a case is the following ADT \texttt{t}:
\begin{center}
{\small
\begin{lstlisting}[language=asn.1]
 t ::= CHOICE {
  s1 SEQUENCE{u(i..i),v(i..i),x(j..j)},
  s2 SEQUENCE{u(j..j),v(i..i),x(i..i)}}
 i ::= INTEGER 1 .. MAX
 j ::= INTEGER 1 .. MAX
\end{lstlisting}
}
\end{center}
\noindent which has its instances belonging to the intrinsically ambiguous 
language $\mathbf{L}$$=$$\{u^iv^ix^j\ $ $\vee\ u^jv^ix^i \ \mathrm{s.t.}$ $ 
u,v,x$$\in$$\term,\ i$$\geq$$1, j$$\geq$$1 \}$.
To provide a convenient way to cope with ambiguities, ASN.$1$ introduces 
the so-called user-defined \emph{tag} elements.
A \emph{tag} is a syntactic element, denoted as a decimal number enclosed in 
square brackets, which is prefixed to an ADT appearing in the right hand side of 
a data description. 
Proper use of {\em tag}s minimally alters the language of accepted ADT 
instances, while effectively curbing ambiguities.
The inherent ambiguity of the sentences of the language of the previously shown 
ADT \texttt{t} can be eliminated adding two {\em tag}s to its description: 
\vspace{-0.25cm}
\begin{center}
{\small
\begin{lstlisting}[language=asn.1]
 t ::= CHOICE {
  s1 SEQUENCE{ [0] u(i..i),v(i..i),x(j..j)},
  s2 SEQUENCE{ [1] u(j..j),v(i..i),x(i..i)}}
 i ::= INTEGER 1 .. MAX
 j ::= INTEGER 1 .. MAX
\end{lstlisting}
}
\end{center}
\vspace{-0.35cm}
\noindent It is worth noting that there is no algorithmic way to check whether a
given set of \emph{tag}s introduced in an ASN.$1$ specification is enough to 
suppress all parsing ambiguities, as the expressive power of ASN.$1$ is 
sufficient to define a generic CF grammar, for which determining if it is 
ambiguous is a well known undecidable problem~\cite{Harrison:1978:IFL:578595}.

\subsection{Distinguished Encoding Rules}\label{subsec:der}
The encoding rules for an ASN.$1$ data type instance 
(see ITU-R X.$690$--X.$696$~\cite{Xseries}) define several formats portable 
across different architectures by mandating bit and byte value conventions and 
ordering of the encoded contents. 
We tackle the Distinguished Encoding Rules (DER), as a   
{\em constructed typed} field in the X.$509$ standard certificate ADT requires 
its DER encoding via the \texttt{ENCODED BY} keyword. 
Consequentially, DER is the encoding rule employed in the 
overwhelming majority of X.$509$ certificate instances found.
DER encoded material may be further mapped to the fully printable 
\texttt{base64} encoding, resulting in the so-called Privacy-enhanced Electronic 
Mail (PEM) format~\cite{RFC1421}. 
The DER encoding strategy represents an ASN.$1$ ADT instance as a stream of 
bytes which is logically split up into three fields: 
\emph{identifier octets}, \emph{length octets}, \emph{contents octets}.

The \emph{identifier octets} field is employed to encode the ASN.$1$ tag value
and whether the ADT instance at hand is a primitive or a constructed 
ASN.$1$ type.
The tag value may be either the disambiguating user-defined one present in the 
ASN.$1$ ADT definition, or a so-called \emph{universal tag} assigned by the DER 
standard to all ASN.$1$ primitive types and to the 
\texttt{SEQUENCE, SEQUENCE OF, SET, SET OF} constructs.
If a user-defined tag is present, its encoding is stored in the 
\emph{identifier octets} field, while the encoding of the tagged ADT is stored 
in the \emph{contents octets} field.
To provide a more succinct encoding, ASN.$1$ allows to specify that a given 
user-defined tag is \texttt{IMPLICIT}, i.e., that it should replace the tag of 
the tagged ADT in the encoding in the \emph{identifier octets} field.
On the other hand, the \texttt{EXPLICIT} keyword states that a user-defined 
tag should be encoded according to the default behavior.
\newbox\listboxa
\begin{lrbox}{\listboxa}
	\begin{lstlisting}[language=asn.1]
	Certificate ::= SEQUENCE {
	tbsCert            TBSCertificate,
	signatureAlgorithm AlgorithmIdentifier, 
	signatureValue     BIT STRING }
	
	TBSCertificate ::= SEQUENCE {
	version [0] EXPLICIT Version DEFAULT v1,
	serialNumber         CertSerialNumber,
	signature            AlgorithmIdentifier,
	issuer               Name,
	validity             Validity,
	subject              Name,
	subjectPublicKeyInfo SubjectPubKeyInfo,
	issuerUniqueID  [1] IMPLICIT UniqueId   OPTIONAL,
	subjectUniqueID [2] IMPLICIT UniqueId   OPTIONAL,
	extensions      [3] EXPLICIT Extensions OPTIONAL }
	
	AlgorithmIdentifier ::= SEQUENCE {
	algorithm  OBJECT IDENTIFIER,
	parameters ANY DEFINED BY AlgorithmP OPTIONAL}
	
	SubjectPublicKeyInfo  ::=  SEQUENCE  {
	algorithm         AlgorithmIdentifier,
	subjectPublicKey  BIT STRING }
	\end{lstlisting}
\end{lrbox}

\begin{figure}[!h]
	\centering
	\usebox\listboxa
	\caption{Portion of the description of the X.$509$ \texttt{Certificate} ADT 
		and its fields\label{lst:x509main}}
\end{figure}

The \emph{contents octets} field contains the actual encoding of the 
ADT instance at hand, while the {\em length octet} one stores the size of the 
content field as a number of bytes. 
The number of bytes constituting the \emph{length octets} field varies in the 
$1$ to $126$ range. The encoding conventions for the length value are stated in 
the ITU-R X.$690$~\cite{Xseries}. 
A short form and a long form for the encoding of a length value are possible. 
The short form mandates the encoding of the length field as a single octet in 
which the most significant bit is $0$ and the remaining ones encode the size of 
the {\em contents octet} field from $0$ to $127$ bytes.  
The long form consists of one initial octet followed by one or more subsequent 
octets containing the actual value of the {\em length octet} field.
In the initial octet, the most significant bit is $1$, while the remaining ones 
encode the number of length field octets that follow as an integer varying from 
$1$ to $126$. 
Thus, the number of bytes for the {\em contents octet} field is at most 
$2^{126 \cdot 8}$.
The {\em length octet} field value equal to $128$ encoded as a single byte 
is forbidden in DER, while in other standard encoding rules it is reserved to 
indicate that an indefinite number of bytes will follow. 
It is worth noting that the requirement to validate correctly the information in 
the {\em length octets} field against the actual length of the {\em contents 
octets} one can be done via a regular language matcher, as the lengths have an 
upper bound.
Nevertheless, the corresponding minimal finite state automaton (FSA) has a 
number of states which cannot be represented (see Section~\ref{sec:grammar}).

\subsection{Description of X.509 Certificate Structure}

The structure of an X.$509$ certificate is described as an ASN.$1$ ADT in the 
ITU-R X.$509$~\cite{X509}, and in the RFC $4210$~\cite{RFC4210} and its 
complements~\cite{RFC3279,RFC4055,RFC4491,RFC5280,RFC5480,RFC5758}. 
Its structure evolved over time, as its first version dates back 
to $1988$. In this section, we provide a synthetic description of its contents.

\compactpar{Certificate Abstract Data Type.} 
Figure~\ref{lst:x509main} reports a shortened version of the X.$509$ standard, 
defining the main \texttt{Certificate} ADT. 
In the figure, field names start with a lowercase letter, while ASN.$1$ 
user-defined ADT names start with capital letters.
The \texttt{Certificate} ADT is a concatenation of three fields: the material to
be signed typed as \texttt{TBSCertificate}, the identification data for the 
signature algorithm typed as \texttt{AlgorithmIdentifier}, and a 
\texttt{BIT STRING} field containing the actual signature value.
%%
%\newpage 
%%

\compactpar{TBSCertificate Abstract Data Type.} 
Considering the contents of the \texttt{TBSCertificate} ADT, 
the first two fields contain a version number typed as \texttt{Version} (a value 
constrained \texttt{INTEGER}), and an integer value which must be unique 
among all the certificates signed by the same certification authority (CA), 
which is typed as \texttt{CertSerialNumber}.
The third field, typed as an \texttt{AlgorithmIdentifier}, contains the 
information to uniquely identify the cryptographic primitive employed to sign 
the certificate.

The \texttt{AlgorithmIdentifier} ADT is a concatenation of two fields:
an ASN.$1$ \texttt{OBJECT IDENTIFIER} (OID) typed field
\texttt{algorithm} and
an optional \texttt{parameters} field, typed as 
\texttt{ANY\,DEFINED\,BY\,AlgorithmP}. 
The OID value allows to uniquely label the signature algorithm to allow 
the description of the structure of its parameters, done 
in~\cite{RFC3279,RFC4055,RFC4491,RFC5480,RFC5758} for a set of standardized 
signature algorithms.
The \texttt{issuer}, \texttt{validity} and \texttt{subject} fields 
contain information on the CA issuing the certificate and the subject to whom it
has been issued, together with the validity time interval of the certificate.
\texttt{issuer} and \texttt{subject} fields are typed as a \texttt{Name} ADT: 
a \texttt{SEQUENCE OF} of \texttt{SET OF} structures containing 
a concatenation of two fields typed as OID and \texttt{ANY}, respectively.
Despite the quite baroque definition, the \texttt{Name} ADT is indeed employed
to represent a list of names for both the issuer and the subject which are 
typically expressed as printable strings prefixed with a standardized OID 
value stating their meaning (e.g., organization, state).

The \texttt{subjectPublicKeyInfo} field provides both the public key binded 
to the subject identity, and information on the employed cryptographic 
primitive in the form of a \texttt{BIT STRING} and an 
\texttt{AlgorithmIdentifier} typed field, respectively.
Following the \texttt{subjectPublicKeyInfo} field, the \texttt{TBSCertificate} 
ADT includes two deprecated extra optional fields, 
containing further information about the issuer and the subject.
These fields are tagged with tags \texttt{[1]} and \texttt{[2]} 
respectively, preventing a possible parsing ambiguity arising from 
only one of them being present.\\
\newbox\listboxb
\begin{lrbox}{\listboxb}
	\begin{lstlisting}[language=asn.1]
	Extensions ::= SEQUENCE SIZE(1..MAX) OF Extension
	
	Extension ::= SEQUENCE {
	extnID    OBJECT IDENTIFIER,
	critical  BOOLEAN DEFAULT FALSE,
	extnValue OCTET STRING CONTAINING ...  ENCODED BY der }
	\end{lstlisting}
\end{lrbox}
\begin{figure}[!t]
\centering
\usebox\listboxb
\caption{X.$509$ ASN.$1$ description reporting
the definition of the last field of the \texttt{TBSCertificate} 
ADT\label{lst:x509ext}}
\end{figure}

\compactpar{Extension Abstract Data Type.} 
The \texttt{extensions} field concludes the definition of the 
\texttt{TBSCertificate} ADT.
Most of the information of modern certificates is contained in it, and its 
presence is mandatory in the current version (v$3$) of X.$509$ certificates. 
As reported in Figure~\ref{lst:x509ext} the \texttt{Extensions} ADT is a 
sequence of one or more \texttt{Extension} typed fields. 
Each \texttt{Extension} ADT is composed of an OID typed field identifying it 
unambiguously, and a \texttt{critical} field typed as \texttt{BOOLEAN} 
indicating, if {\tt True}, that the certificate validation should fail in case 
the application either does not recognize or cannot process the information 
contained in the subsequent \texttt{extnValue} field.
An example of information contained in the \texttt{extnValue} field is the
so-called {\tt KeyUsage}, i.e., information stating which is the legitimate 
purpose of the subject public key in the certificate at hand 
(e.g., signature validation or encryption). 
%
% ---------------------------------------------------------------------------- %
%
\section{Analysis of the X.509 Certificate Language}\label{sec:x509}
In this section, we analyze the X.$509$ certificate structure
from a language theoretic standpoint. In particular, we highlight the
portions of the certificate which hinder and harden the design of
a grammar amenable to automatic parsing generation algorithms. These 
issues will be tackled in the next section, in order to achieve our 
goal of obtaining an effective and decidable parser for X.$509$ digital
certificates.

\compactpar{Undecidability of The Parsing.}
Some portions of an X.$509$ certificate requires an unrestricted grammar
in order to be generated, in turn implying the parsing undecidability. 
First, consider the {\tt signature} field in the 
{\tt TBSCertificate} ADT, which is typed as an {\tt AlgorithmIdentifier}
ADT. This is composed by an OID, identifying the signature algorithm, and
by optional {\tt parameters}, whose structure depends on the OID. The 
complete freedom in the specification of the {\tt parameters} 
field via the \texttt{ANY} keyword of ASN.$1$ meta-syntax allows the description 
to be arbitrarily complex, in particular allowing natural language constraints
to be specified on eventual fields of \texttt{parameters}.
This, in turn, requires an \emph{unrestricted} grammar to generate the 
language and results in the undecidability of the certificate parsing.
A similar issue arises in {\tt extnValue} field of the {\tt Extension} ADT.
This field is typed as an \texttt{OCTET STRING} and is turned into a 
constructed type by the keyword {\tt CONTAINING} 
(see Section~{\ref{subsec:asn1}}). 
The specific structure of the \texttt{extnValue} field  depends on the 
value of the {\tt extnID} field, which is an OID identifying the extension type. 
The standard allows the definition of custom extensions, which may specify an 
arbitrary content for it~\cite{X509}. Such a lack of constraints allows the
checks on this content to be arbitrary complex, i.e., representable by an
unrestricted grammar.

\compactpar{Context-Sensitiveness.}
Some portions of X.$509$ certificate structure cannot be generated by a 
context-free grammar, in turn implying that for these portions an efficient 
automatically generated parser cannot be derived. 
The context-sensitiveness is introduced by the same kind of constraint in $2$ 
different portions of the certificates. This constraint is the
need to check repetitions of arbitrarily long strings. 
First, consider the {\tt signatureAlgorithm} field in {\tt Certificate} ADT and 
the {\tt signature} field in {\tt TBSCertificate ADT}, both typed as 
{\tt AlgorithmIdentifier}. 
While the latter is found in the portion of the certificate which is signed, 
the former is not. 
Therefore, it is expected that these $2$ fields have the same content, in
turn requiring a Context-Sensitive (CS) recognizer to check this equality. 
Similarly, some portions of the certificate must be present if the certificate 
is self-issued, which means that the issuer and the subject are the same entity. 
Recall that both {\tt issuer} and {\tt subject} are typed as {\tt Name} ADT, 
which is an arbitrarily long sequence of names, which are generally printable 
strings, referring to the issuer or subject entity, coupled with an OID 
identifying their meaning. 
Verifying if a certificate is self-issued or not requires to match the content 
of {\tt issuer} and {\tt subject} fields, which are arbitrarily long string 
of bytes.

\compactpar{Parsing Ambiguities and Inconsistencies.}
Due to looseness in standard specification of some fields, there are also 
parsing ambiguities in the format. 
First, consider the {\tt signatureValue} field in the {\tt Certificate} ADT. 
Despite the X.$509$ standard is typing this field as a primitive 
\texttt{BIT STRING}, some standardized signature algorithms require it to be a 
constructed field (e.g., the DSA and ECDSA cryptographic 
primitives~\cite{RFC5480,RFC5758}), in turn giving way to an 
ambiguity in its interpretation.
The same kind of issue arises in {\tt SubjectPublicKeyInfo} field in 
{\tt TBSCertificate} ADT. 
Indeed, the public key field is typed as a \texttt{BIT STRING}, 
instead of a constructed ADT. These ambiguities may lead to security issues 
when parameters for a given cryptographic primitive are either misinterpreted 
as valid for another one or simply parsed 
incorrectly~\cite{BERserk-Sketch,weakdh15}.
A further issue is introduced by the {\tt extnValue} field in {\tt Extension} 
ADT, which is typed as {\tt OCTET STRING} but contains the extension data, 
which is usually a constructed ADT. Nevertheless, DER forbids the encoding of 
\texttt{OCTET STRINGS} as constructed types 
(see ITU-R X.$690$)~\cite{Xseries}, in turn forcing an inconsistency
in the way \texttt{OCTET STRINGS} containing the value of an \texttt{extnValue} 
field should  be treated during parsing (due to the {\tt CONTAINING} keyword in 
the X.$509$ specification). 
We note that a less problematic definition of the field would have involved
a dedicated constructed ADT for the \texttt{extnValue} field, which should have 
had its structure specified according to the value of the \texttt{extnID} field.

\compactpar{Unmanageable Number of Rules.}
Recall that the {\tt Extensions} ADT is a sequence of {\tt Extension} ADTs. 
Such a sequence of ADTs has no constraint on their order, as they all share the 
same ADT. However, a  constraint expressed in natural language in the X.$509$ 
standard mandates that $2$ \texttt{Extension} typed field instances with the 
same \texttt{extnID} field cannot appear, in turn providing a concrete hindrance 
to its representation in grammar form. 
Indeed, an exponential number of productions would be required to generate 
unique \texttt{Extension} instances in any possible order. 
Considering that the number of X.$509$ standardized extensions is 
$17$~\cite{X509}, the required grammar would have at least $2^{17}$ 
productions, which is hardly manageable by a designer.

\subsection{Summing Up Hindrances to X.509 Parsing}\label{subsec:hindrances}
In the following statements we provide a synthesis of the undecidability, 
context sensitiveness and ambiguity problems arising from the X.$509$ 
certificate standard and recall the issue about the size of the grammar 
representation reported in Section~\ref{subsec:der}.

\begin{statement}{\sc(Undecidability Context Sensitiveness and Ambiguity)}
 The following issues should be tackled lest no unambiguous, correct parsing
 of X.$509$ is possible:
\begin{description}%[leftmargin=2em]
\item[$\phantom{ii}i$)] Cope with the undecidability arising from the potential 
definition of arbitrary structures for both the \texttt{AlgorithmP} ADT 
in the right-hand side of the \texttt{AlgorithmIdentifier} ADT 
and the value of the \texttt{extnValue} field in the right-hand side of 
\texttt{Extension} ADT, without requiring an \emph{unrestricted} 
grammar to specify them.
\item[$\phantom{i}ii$)] 
Cope with the context sensitiveness of the X.$509$ grammar, arising from the 
equality checks required on the \texttt{signature} field of the 
\texttt{TBSCertificate} ADT against the \texttt{signatureAlgorithm} field 
of \texttt{Certificate} ADT, and the need to match arbitrarily long 
\texttt{subject} and \texttt{issuer} fields of the \texttt{TBSCertificate} ADT.
\item[$iii$)] 
Cope with the ambiguity introduced by the definition of the \texttt{extnValue}
field as a constructed \texttt{OCTET STRING}, while its encoding is forced
to mark it as a primitive one. Similarly cope with the issue of the 
\texttt{signatureValue} field and the \texttt{subjectPublicKey} field where
information represented as a constructed data type is encoded as a 
primitive \texttt{BIT STRING}. 
\end{description}
\end{statement}

\noindent In addition to the aforementioned issues, some of the portions of the 
X.$509$ language which can be defined with a linear grammar 
are still unpractical to be fully specified.

\begin{statement}{\sc (Unmanageable Grammar Representation)}
\label{stmt:largeFSA} 
The presence of uniqueness constraints among the instances of the 
\texttt{Extension} ADT standardized in~\cite{RFC5280} can only be expressed 
with a linear grammar having $>$$2^{17}$ productions. 
\end{statement}

\noindent We consider the aforementioned issues to be among the main causes 
of the current approach to X.$509$ parsing, which sees fully handcrafted 
implementations of the parser as an alternative to the development of an 
automatically generated one.
However, given the complexity of the X.$509$ certificate standard specification, 
and the impossibility of methodically checking the correctness of a 
human-implemented recognizer (due to context-sensitive portions of the X.$509$ 
language), the current state-of-the-art parser implementations are exhibiting a 
non-trivial amount of security critical errors in the recognition of X.$509$ 
sentences~\cite{marlinspike02,marlinspike09,DBLP:conf/fc/KaminskyPS10}.
%
% ---------------------------------------------------------------------------- %
%
\section{Systematic X.509 Parsing}\label{sec:grammar}
In this section we describe our approach to realize an automatically generated 
X.$509$ parser, starting from a grammar description.
The obtained parser enjoys termination guarantees, correctness and a 
worst-case computational complexity which is quadratic in the length of the 
certificate.

We strive to minimize the non-regular portion of X.$509$ language, allowing an 
efficient parser to be systematically generated.
However, since it is not possible to turn X.$509$ into a regular language 
without either significant precision loss in recognition, or substantial 
functionality loss in the parser, we retain a small CS portion of its 
description.
Such a CS portion, indeed made of two binary string equality checks and a single
boolean condition, is implemented by hand, as relying on the generic CS parsing
algorithm would entail an exponential running time even for such rather trivial 
operations.

\subsection{Coping with Undecidability, Context Sensitiveness and Ambiguity}
\label{subsec:undec}
The first step to obtain a useful grammar description of the X.$509$ standard
is to cope with its parsing undecidability problem.
We restricted the \texttt{AlgorithmIdentifier} ADT descriptions to the 
ones which are explicitly and fully described in the 
standards~\cite{RFC3279,RFC4055,RFC4491,RFC5758,RFC5480}.
As a consequence we remove the possibility of having an unrestricted syntactic
structure of their fields allowed by the ASN.$1$ keyword \texttt{ANY} in the 
standard.
This decision implies that we will recognize certificates employing only 
algorithms for which a standardized \texttt{AlgorithmIdentifier} description 
exists, which is typically a sign of wide acceptance of the cryptographic 
security margin offered by them.
Examining all the standardized \texttt{AlgorithmIdentifier} descriptions 
we note that their structures are described by regular languages.
This restriction may prevent correct recognition of valid X.$509$
certificates found in the wild. During our practical evaluation
campaign, we observed just a negligible percentage of certificates being 
affected by this restriction, as detailed in Section~\ref{sec:expres}.
We tackled the similar problem caused by the \texttt{ANY}-typed 
\texttt{extnValue} field by parsing the syntactic structure of all
the standard defined extensions~\cite{RFC5280} strictly.

However, in contrast with the decision taken for the non standard 
\texttt{AlgorithmIdentifier} structures, we accept user-defined extensions, 
considering them opaque byte sequences, and performing only a syntactic check
on the correctness of the OID field and the length of the unspecified structure
field.
Indeed, in case of a custom extension for which the Boolean \texttt{critical} 
field is set, a validation failure must be signaled by the application logic, in
case it is not able to handle the contents of the \texttt{extnValue} 
field~\cite{X509}.
Taking into account the described restrictions, the resulting X.$509$ 
specification is a CS language $\mathbf{L}_{\rm X.509-r1}$ where the 
string membership problem is decidable.

The second step in producing a grammar description of X.$509$ is to tackle its
context-sensitiveness.
To this end, we consider the language $\mathbf{L}_{\rm X.509-r1}$ as the 
intersection of three languages 
$\mathbf{L}_{\rm X.509-r1}=\mathbf{L}_{\rm AId-match} \cap 
\mathbf{L}_{\rm SI-match} \cap \mathbf{L}_{\rm CF}$, where 
$\mathbf{L}_{\rm AId-match}$ is the language containing sentences with two 
matching \texttt{AlgorithmIdentifier} type instances, 
$\mathbf{L}_{\rm SI-match}$ is the one where whenever the certificate 
\texttt{subject} and \texttt{issuer} match, the \texttt{keyIdentifier} field 
of the \texttt{AuthorityKeyIdentifier} extension ADT should be present, and 
$\mathbf{L}_{\rm CF}$ is the CF language where all the remaining 
X.$509$ linguistic constraints are holding on the sentences.
Thus, parsing $\mathbf{L}_{\rm X.509-r1}$ can be done analyzing the
candidate string (i.e., a certificate instance) with three parsers, each one 
recognizing one of the three aforementioned languages.
We choose to start with the parser $\mathcal{A}_{\rm CF}$ for 
$\mathbf{L}_{\rm CF}$, as it can be automatically generated from its grammar 
description. 
The parse tree produced by $\mathcal{A}_{\rm CF}$ is then employed by a 
handwritten string matcher, which checks the conditions required for a string 
to belong both to $\mathbf{L}_{\rm AId-match}$ and to 
$\mathbf{L}_{\rm SI-match}$.
The code auditing of such a simple handwritten string matcher can be confidently
performed by direct inspection to check its correctness. 
Indeed, these pieces of code are small and definitely simpler than a full 
hand-written parser, dramatically decreasing the complexity of the code 
auditing process. 
Therefore, spotting any vulnerabilities in these hand-written parsers is 
expected to be easier. 
Finally, we remark that this code has no influence on the automatic
generated parser for $\mathcal{A}_{\rm CF}$, since these matchings are
performed when the parse tree for $\mathcal{A}_{\rm CF}$ has already been built.
Consequently, it is not possible that a subtle vulnerability in this code 
affects the correctness of the $\mathcal{A}_{\rm CF}$ parser. 

The last step to achieve unambiguous, decidable recognition of the X.$509$ 
certificates is to cope with the ambiguity issues stemming from the primitive 
encoding forced by DER on constructed \texttt{OCTET STRING}s in the 
\texttt{extnValue} field and the \texttt{BIT STRING}s containing 
signature and public key material specified by the standard to be primitive.
Concerning the issue of the \texttt{extnValue} field, we choose to eliminate
the ambiguity relying on the fact that the standards define the contents of 
the constructed type binding it to a specific OID: this in turn allows to 
 parse unambiguously an ADT instance.
Concerning the \texttt{BIT STRING} fields, we eliminate the parsing ambiguity
recognizing them as constructed or primitive types according to the individual
cryptographic primitive needs as specified in the 
standards~\cite{RFC3279,RFC4055,RFC4491,RFC5758,RFC5480}.
Such an approach will indeed pick only a single way of interpreting the data
contained in the field, preventing security critical parsing issues such as
the ones in~\cite{weakdh15}.

Once such ambiguities are removed from $\mathbf{L}_{\rm CF}$, the resulting 
language is indeed regular and will be denoted as $\mathbf{L}_{\rm REG}$ from 
now on, while the resulting restricted X.$509$ language will be denoted as 
$\mathbf{L}_{\rm X.509-r2}$$=$$\mathbf{L}_{\rm AId-match} \cap 
\mathbf{L}_{\rm SI-match} \cap \mathbf{L}_{\rm REG}$.

\begin{lemma} The proposed parser for $\mathbf{L}_{\rm X.509-r2}$ terminates
on any input string.
\end{lemma}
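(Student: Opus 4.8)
The plan is to exploit the compositional structure of the proposed parser, which by construction runs in two sequential phases: first the automatically generated recognizer $\mathcal{A}_{\rm REG}$ for the regular language $\mathbf{L}_{\rm REG}$, and then the handwritten matcher that, using the parse tree emitted by $\mathcal{A}_{\rm REG}$, enforces the two remaining constraints defining $\mathbf{L}_{\rm AId-match}$ and $\mathbf{L}_{\rm SI-match}$. Since the two phases run one after the other, it suffices to establish that each phase halts on every input $w$$\in$$\term^*$; termination of the whole parser then follows immediately.

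First I would treat the recognizer $\mathcal{A}_{\rm REG}$. Because $\mathbf{L}_{\rm REG}$ has already been shown to be regular, the earlier-recalled fact that every regular language admits a recognizer running in linear time and constant memory in the length of the input applies directly: on any input $w$ the recognizer performs a single left-to-right scan, and since $w$ is finite it halts after $O(|w|)$ steps, either rejecting $w$ (in which case the parser stops and reports failure) or emitting a finite parse tree. The point worth making explicit is that the right-linear shape of the productions rules out any $\varepsilon$-cycle: every production application consumes exactly one terminal, so no infinite sequence of non-consuming steps is possible, and progress toward exhausting the finite input is guaranteed at each step. The fact that the minimal FSA for $\mathbf{L}_{\rm REG}$ is astronomically large (as noted in Section~\ref{subsec:der} and Statement~\ref{stmt:largeFSA}) is irrelevant here: finiteness of the state set, not its size, is what termination requires.

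Next I would handle the handwritten matcher. The parse tree produced by $\mathcal{A}_{\rm REG}$ for an accepted $w$ has a number of nodes bounded by $O(|w|)$, hence is finite. The matcher performs only a fixed number of operations on this tree: it locates the two \texttt{AlgorithmIdentifier} substrings and compares them symbol by symbol for the $\mathbf{L}_{\rm AId-match}$ check, then locates the \texttt{subject} and \texttt{issuer} substrings and compares them, conditionally inspecting the \texttt{keyIdentifier} field of the \texttt{AuthorityKeyIdentifier} extension for the $\mathbf{L}_{\rm SI-match}$ check. Each of these is a comparison or a presence test over substrings of $w$, which are themselves finite, so each operation terminates; a bounded traversal of a finite tree with terminating work at each visited node terminates.

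Combining the two phases, the parser halts on every input $w$, which is the claim. The main obstacle I anticipate is not in the matcher, whose termination is essentially immediate once the parse tree is known to be finite, but in pinning down rigorously that $\mathcal{A}_{\rm REG}$ cannot loop: one must rule out the possibility that the generated parser applies non-input-consuming productions indefinitely, and this is exactly where the regularity of $\mathbf{L}_{\rm REG}$ (equivalently, the absence of erasure and of left recursion in the right-linear grammar) is indispensable. Everything downstream of a terminating, finite-output $\mathcal{A}_{\rm REG}$ is routine.
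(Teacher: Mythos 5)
Your proof is correct and follows essentially the same route as the paper's: both decompose the parser into its sequentially executed stages and observe that the recognizer for $\mathbf{L}_{\rm REG}$ is an FSA that halts after scanning the finite input, while the handwritten matchers for $\mathbf{L}_{\rm AId-match}$ and $\mathbf{L}_{\rm SI-match}$ operate on finite strings (equivalently, a finite parse tree) and therefore terminate. The only difference is that the paper additionally remarks that $\mathbf{L}_{\rm X.509-r2}$ is context-sensitive by closure of the CS family under intersection, a necessary condition for decidable parsing, but that remark plays no role in the sufficiency argument, which you give in full (and with somewhat more detail on why the FSA cannot loop).
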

\begin{proof} Since $\mathbf{L}_{\rm X.509-r2}$$=$$\mathbf{L}_{\rm AId-match} 
\cap \mathbf{L}_{\rm SI-match} 
\cap \mathbf{L}_{\rm REG}$ is the result of the intersection of two CS languages 
and a REG language, the closure property of the CS family w.r.t. set 
intersection implies that $\mathbf{L}_{\rm X.509-r2}$ is CS, a necessary
condition for the termination of its parsing process. 
To prove the termination guarantee of the proposed parser, it is sufficient to 
observe that all the three parsers, sequentially executed to recognize 
$\mathbf{L}_{\rm X.509-r2}$, have guarantees on their termination, namely: 
the parser for $\mathbf{L}_{\rm REG}$ is a FSA, while the string 
matchers act on arbitrarily large, but not infinite strings.
\end{proof}

\begin{lemma} $\mathbf{L}_{\rm X.509-r2}$ is non intrinsically ambiguous, and 
all its sentences can be parsed unambiguously.
\end{lemma}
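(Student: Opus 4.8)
The plan is to reduce the whole claim to the single structural fact that $\mathbf{L}_{\rm REG}$ is regular, and then argue that intersecting it with the two context-sensitive predicate languages cannot reintroduce ambiguity. First I would recall the classical fact that no regular language is intrinsically ambiguous: given a deterministic finite automaton recognizing $\mathbf{L}_{\rm REG}$, the standard state-to-nonterminal construction yields a right-linear grammar in which each state becomes an element of \nterm\ and each transition a production. Because the automaton is deterministic, every accepted string labels exactly one accepting run, hence admits exactly one leftmost derivation, so the grammar is unambiguous and each $w$$\in$$\mathbf{L}_{\rm REG}$ has a unique parse tree. This is precisely the payoff of the disambiguation step that produced $\mathbf{L}_{\rm REG}$ from $\mathbf{L}_{\rm CF}$: by committing to a single interpretation of every \texttt{BIT STRING} and of every constructed \texttt{OCTET STRING} field, the construction leaves no alternative structural reading of any sentence.

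Second, I would observe that $\mathbf{L}_{\rm X.509-r2}$$=$$\mathbf{L}_{\rm AId-match}\cap\mathbf{L}_{\rm SI-match}\cap\mathbf{L}_{\rm REG}$ is a subset of $\mathbf{L}_{\rm REG}$, and that the two context-sensitive factors act only as accept/reject predicates over a string that has already been parsed. Operationally, the parser $\mathcal{A}_{\rm CF}$, which now recognizes the regular language $\mathbf{L}_{\rm REG}$, builds the unique parse tree of the input regarded as a sentence of $\mathbf{L}_{\rm REG}$; the two handwritten string matchers then merely inspect that fixed tree, checking the byte-equality of the two \texttt{AlgorithmIdentifier} substrings and the conditional presence of the \texttt{keyIdentifier} field, without ever producing an alternative derivation. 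Hence, for every $w$$\in$$\mathbf{L}_{\rm X.509-r2}$ the parse tree is inherited, unchanged and unique, from the unambiguous regular backbone, which establishes the operational half of the statement: all its sentences are parsed unambiguously.

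Third, to obtain the language-theoretic half (non-intrinsic ambiguity) I would exhibit an unambiguous grammar generating exactly $\mathbf{L}_{\rm X.509-r2}$, built by augmenting the unambiguous right-linear grammar for $\mathbf{L}_{\rm REG}$ with context-sensitive productions that enforce the two matching constraints while preserving the regular skeleton that fixes the order of substitutions. Since the derivation order is pinned down by the regular backbone and the added productions only gate or copy already-determined symbols, no sentence acquires a second derivation, so the grammar is unambiguous. The main obstacle, and the point deserving the most care, is exactly this last step: I must argue that the productions enforcing the equality and presence checks do not themselves create derivational choices. The cleanest way around it is to keep the notion of ambiguity anchored to the regular backbone, treating $\mathbf{L}_{\rm AId-match}$ and $\mathbf{L}_{\rm SI-match}$ strictly as filters over the unique parse trees of $\mathbf{L}_{\rm REG}$, rather than reasoning about an arbitrary context-sensitive grammar for the intersection, for which ambiguity would in general be undecidable.
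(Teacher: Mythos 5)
Your proposal is correct and follows essentially the same route as the paper's own proof: both arguments rest on the determinism of the finite-state recognizer for $\mathbf{L}_{\rm REG}$ yielding a single recognition path (hence a unique parse tree) for every sentence, and on the observation that the context-sensitive matchers for $\mathbf{L}_{\rm AId-match}$ and $\mathbf{L}_{\rm SI-match}$ act purely as filters that discard strings without ever producing an alternative derivation. Your treatment is somewhat more explicit than the paper's --- spelling out the DFA-to-right-linear-grammar construction and flagging that the language-theoretic notion of intrinsic ambiguity strictly calls for an unambiguous grammar for the intersection itself --- but you resolve that point exactly as the paper does, by anchoring ambiguity to the regular backbone and treating the CS constraints as predicates over its unique parse trees.
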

\begin{proof} The only source of ambiguity in the original X.$509$ specification
is the one caused by interpretation issues of the X.$509$ standard, which 
are no longer present in $\mathbf{L}_{\rm X.509-r2}$.
The first stage of the proposed parser matches the sentences of 
$\mathbf{L}_{\rm X.509-r2}$$=$$((\mathbf{L}_{\rm REG} \cap 
\mathbf{L}_{\rm AId-match}) \cap \mathbf{L}_{\rm SI-match})$ by means of a 
deterministic finite state recognizer, 
which can only have a single recognition path for each one of them.
The CS recognizer stages simply discard some of the strings deemed
valid by the first stage, and thus do not introduce ambiguity.
\end{proof}

\subsection{Managing Grammar Representation and Recognizer Generation}

Having dealt with the linguistic issues which could either have prevented the
parsing (i.e., the undecidability) or made its result unusable (i.e., the 
ambiguity), we now tackle the issues concerning the practical construction of 
a grammar representation for X.$509$ and the generation of its recognizer.
Among the recognizers required to match $\mathbf{L}_{\rm X.509-r2}$$=
$$((\mathbf{L}_{\rm REG} \cap \mathbf{L}_{\rm AId-match} )\cap 
\mathbf{L}_{\rm SI-match})$ the last two suffer from no issues either in their
syntax definition or in their recognizer generation.

In order to tackle the issues coming from Statement~\ref{stmt:largeFSA} in 
Section~\ref{subsec:hindrances}, we consider $\mathbf{L}_{\rm REG}$ as the 
intersection of three regular languages: 
$\mathbf{L}_{\rm REG} = \mathbf{L}_{\rm REG-1} \cap \mathbf{L}_{\rm REG-2}\cap 
\mathbf{L}_{\rm REG-3}$.
$\mathbf{L}_{\rm REG-1}$ is the language of byte strings containing sequences 
of \texttt{Extension}s instances where each \texttt{Extension} instance appears 
at most once, $\mathbf{L}_{\rm REG-2}$ is the language of DER encoded ASN.$1$ 
fields with the correct constraints holding between the {\em length octet} field 
and the {\em contents octets} field, and $\mathbf{L}_{\rm REG-3}$ contains all 
the remaining regular constraints to obtain a sentence in $\mathbf{L}_{\rm REG}$
when considered in intersection with $\mathbf{L}_{\rm REG-1}$ and 
$\mathbf{L}_{\rm REG-2}$.

A straightforward definition of $\mathbf{L}_{\rm REG-1}$ with a linear grammar 
requires $17!$ rules, a number which can be reduced to a little over $2^{17}$ 
through careful rewriting. 
We deem this number of productions too high to be described and checked 
reliably, thus we match a relaxation of $\mathbf{L}_{\rm REG-1}$ at first, which
allows the \texttt{Extension}s to be duplicated, and perform a subsequent 
uniqueness check on the result of the parsing action.
This choice allows us to retain the description of the structure of the 
individual \texttt{Extension} instance as a purely syntactic one, and perform 
the check for duplicates on their parsed OID fields.
We implemented the duplicate checking logic as an handwritten portion of 
our parser, as we deem the code to be small and simple enough to allow effective
code auditing of it.

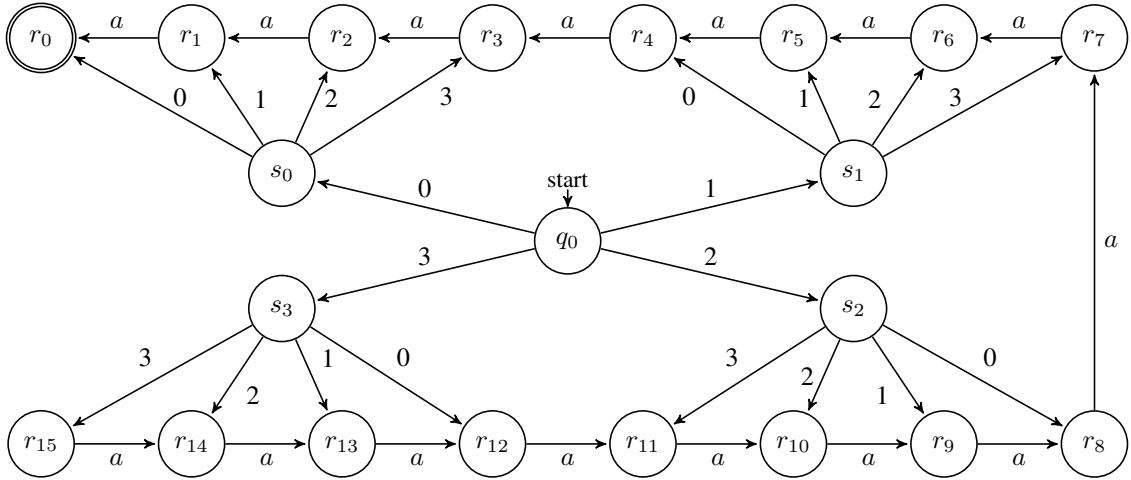
\begin{figure}[!t]
\begin{center}
\begin{tikzpicture}[->,>=stealth',shorten >=1pt,
                    auto,node distance=2cm, 
                    scale=0.6,semithick]
  \tikzstyle{every state}=[fill=none,draw=black,text=black]
  \node[state,accepting](0)                     {$r_{0 }$};
  \node[state]         (1)  [right of=0]        {$r_{1 }$};
  \node[state]         (2)  [right of=1]        {$r_{2 }$};
  \node[state]         (3)  [right of=2]        {$r_{3 }$};
  \node[state]         (4)  [right of=3]        {$r_{4 }$};
  \node[state]         (5)  [right of=4]        {$r_{5 }$}; 
  \node[state]         (6)  [right of=5]        {$r_{6 }$};
  \node[state]         (7)  [right of=6]        {$r_{7 }$};

  \node[state]         (0x) at (5.333 ,-3)       {$s_{0}$};  
  \node[state]         (1x) at (18	,-3)         {$s_{1}$};  
  \node[state]         (q0) at (11.666,-4.5)         {$q_{0}$};
  \draw[<-]            (q0) -- node[above]      {\small start} ++(0,1.2);

  \node[state]         (3x) at (5.333 ,-6)           {$s_{3}$};  
  \node[state]         (2x) at (18,-6)          {$s_{2}$};   

  \node[state]         (15) at (0,-9)           {$r_{15}$};
  \node[state]         (14) [right of=15]       {$r_{14}$};
  \node[state]         (13) [right of=14]       {$r_{13}$};
  \node[state]         (12) [right of=13]       {$r_{12}$};
  \node[state]         (11) [right of=12]       {$r_{11}$};
  \node[state]         (10) [right of=11]       {$r_{10}$}; 
  \node[state]         (9)  [right of=10]       {$r_{ 9}$};
  \node[state]         (8)  [right of=9]        {$r_{ 8}$};

  \path (1)   edge []    node [above] {$a$} (0)
        (2)   edge []    node [above] {$a$} (1)
        (3)   edge []    node [above] {$a$} (2)
        (4)   edge []    node [above] {$a$} (3)
        (5)   edge []    node [above] {$a$} (4)
        (6)   edge []    node [above] {$a$} (5)
        (7)   edge []    node [above] {$a$} (6)
        (8)   edge []    node [right] {$a$} (7)
        (9)   edge []    node [below]{$a$} (8)
        (10)  edge []    node [below]{$a$} (9)
        (11)  edge []    node [below]{$a$} (10)
        (12)  edge []    node [below]{$a$} (11)
        (13)  edge []    node [below]{$a$} (12)
        (14)  edge []    node [below]{$a$} (13)
        (15)  edge []    node [below]{$a$} (14);

  \path (q0)  edge []    node [above] {0} (0x)
        (q0)  edge []    node [above] {1} (1x)
        (q0)  edge []    node [above] {2} (2x)
        (q0)  edge []    node [above] {3} (3x);

  \path (2x) edge []          node {0} (8)
        (2x) edge []          node [below left] {1} (9)
        (2x) edge []          node [left]  {2} (10)
        (2x) edge []          node [above left] {3} (11)
        (3x) edge []          node {0} (12)
        (3x) edge []          node {1} (13)
        (3x) edge []          node {2} (14)
        (3x) edge []          node [above left] {3} (15);
  
  \path (0x) edge []          node [above,pos=0.4] {0} (0)
        (0x) edge []          node [above right,pos=0.35] {1} (1)
        (0x) edge []          node [right,pos=0.6] {2} (2)
        (0x) edge []          node [below right,pos=0.78] {3} (3)
        (1x) edge []          node [below left,pos=0.78]{0} (4)
        (1x) edge []          node [left,pos=0.6] {1} (5)
        (1x) edge []          node [above left,pos=0.35]{2} (6)
        (1x) edge []          node [above,pos=0.4]{3} (7);
\end{tikzpicture} 
\end{center}
\caption{Strawman example of finite state recognizer accepting strings of $a$ 
characters appended to their lengths expressed as two radix-$4$ digits. 
The initial state $q_0$ accepts the first digit, and memorizes it by means of 
the finite state memory, and moves to one of the four $s_i$ states, which 
in turn read the second digit of the length and transition to an $r_i$ state. 
$r_i$ states are linked by a countdown chain of transitions, where $i$ 
indicates the amount of remaining $a$ characters to be recognized before 
acceptance\label{fig:fsa_example}}
\end{figure}
Providing an effective and efficient grammar definition for 
$\mathbf{L}_{\rm REG-2}$ is similarly difficult. 
Indeed, it is enough to consider a simpler but structurally equivalent 
language $\mathbf{L}_{\rm REG-2-P}$, where only the length constraints on 
DER encoded ASN.$1$ primitive types are enforced, 
to obtain an unmanageable number of productions. 
$\mathbf{L}_{\rm REG-2-P}$ is composed of strings of the form $lv$, 
where $l$ is a byte string containing a unique encoding of a length in 
$[0,2^{126 \cdot 8}]$ and $v$ a string of arbitrary bytes with the given length.
A straightforward grammar representation of it requires at least a grammar 
production for each valid value of the byte string $l$. 
Such a number of productions is well beyond the realm of feasibility 
in realization.
The first observation concerning this issue is that the possible values taken
by the $l$ string in practice will be significantly less than the possible ones,
as the sensible lengths for an X.$509$ certificate are typically far smaller
than $2^{126 \cdot 8}$B.
As a consequence, we deemed reasonable to restrict the set of valid represented
lengths (both in $\mathbf{L}_{\rm REG-2-P}$ and in $\mathbf{L}_{\rm REG-2}$) 
to the $[0,2^{32}-1]$ interval, which entails accepting certificates
of size up to $4$ Gib.
Note that, restricting further the encoded length values to a number which 
allows handwriting of the corresponding linear grammar by a human
designer (i.e., in the hundreds range at most) would critically reduce the 
usefulness of the parser, as it would be able to match only contents smaller
than a couple of hundreds bytes.
We chose the aforementioned conservative upper bound of $2^{32}$ to prevent 
a parser that preallocates memory for the contents of the soon-to-be-recognized 
string in advance from being subject to easy Denial-of-Service attacks, 
via incorrect certificates overstating their lengths.
Since the bound imposed by the practical usefulness on the number of rules of 
the grammar is still insufficient to allow a manageable specification of the 
grammar itself, we resort to analyze the parser automaton of 
$\mathbf{L}_{\rm REG-2-P}$ to derive a more manageable implementation of the 
recognizer, and adapt it to recognize the entire $\mathbf{L}_{\rm REG-2}$.

To illustrate how the problem of implementing the recognizer for 
$\mathbf{L}_{\rm REG-2-P}$ is solved in practice, we first provide a toy 
example, recognizing the language of words $lv$, where the content length $l$ is 
expressed by two, radix-4 digits, and is followed by the content $v$ which is a 
sequence of $a$.

The toy Finite State Automaton (FSA) recognizer operating on the set of bytes 
$\Sigma=\{0,1,2,3,a\}$ is defined as $\mathcal{A}_{\rm REG-2-P}: 
(\Sigma, \mathbf{Q}, \delta, q_0, \mathbf{F})$,
where $\mathbf{Q}$ is the set of states of the automaton, $q_0$ the initial 
state, $\mathbf{F}$ the set of final states, and 
$\delta: \mathbf{Q}$$\times$$\Sigma$$\to$$\mathbf{Q}$ the transition function.
A depiction of the recognizer FSA is provided in Figure~\ref{fig:fsa_example}.
The key idea underlying its functioning is count the $a$ symbols in $v$ 
employing a chain of states, of which only the last is an accepting one 
(depicted in figure as $r_{i}, i\in[0,15]$).
The recognition of the length $l$ is performed while decoding which one is the 
state of the chain of counting states where the computation should start
recognizing the content $v$. The decoding is performed memorizing the number
read digit by digit, yielding a single sequence of transitions starting from the
initial state $q_0$ and ending in a counting state for each one of the latter.

The main issue in implementing an automaton with the structure depicted in 
Figure~\ref{fig:fsa_example} is to provide a way to implement the transition 
function $\delta$ different from the common tabulated form, which would require 
an amount of space proportional to the maximum length of the represented string 
multiplied by the alphabet size; i.e., a number of entries 
$\geq 2^{32}$$\cdot$$256$ for the recognizer of $\mathcal{A}_{\rm REG-2-P}$.

Our approach to cope with this issue is to employ a computational form of the 
said $\delta$, where the states are encoded as unsigned integers, and the 
destination state can be computed in closed form.
In the toy example, the $\delta$ function, which has its description as
$\delta(q_0,t) = s_t, t\in \{0,1,2,3\};\ \delta(s_i,t) = r_{4i+t},$ $t\in 
\{0,1,2,3\};\ \delta(r_i,t) = r_{i-1}, t=a, i \in [1,15]$, can be transformed in 
computational form encoding the $r_i$ states with the unsigned integer $i$, 
the $s_i$ states with $2^4+i$ and $q_0$ as $2^5$, and computing it as
$$
\begin{array}{lll}
\delta(q_0,t) = s_t, t\in \{0,1,2,3\}       &\quad \to \quad \quad & 
\delta(q,t) = 2^4+t,\ t\in \{0,1,2,3\}, q=2^5 \\
\delta(s_i,t) = r_{4i+t}, t\in \{0,1,2,3\}  &\quad \to \quad       & 
\delta(q,t) = (q - 2^4)\cdot 4+t,\ t\in \{0,1,2,3\}, q=2^4+i, i\in [0,3]\\
\delta(r_i,t) = r_{i-1}, i \in [1,15]       &\quad \to \quad       & 
\delta(i,t) = i-1,\ t=a, i \in [1,15]\\
\end{array}
$$
\noindent Following the same line of reasoning employed to obtain a 
computational form for the delta of our toy example automaton, we devised the 
representation of the transition function for the recognizer 
$\mathcal{A}_{\rm REG-2-P}$.
In particular, such a length function deals with the possible variable length
encoding of the length field dedicating a specific transition path for each
one of the lengths:
$$
\delta(q_0,a)=\left\{\begin{array}{l}
a      \ \mathrm{with\,} \text{$0$$\leq\,$$a\,$$\leq$$127$}\\ 
2^{32} \ \mathrm{with\,} \text{$a$$=$}129\\
2^{33} \ \mathrm{with\,} \text{$a$$=$}130\\  
2^{34} \ \mathrm{with\,} \text{$a$$=$}131\\  
2^{35} \ \mathrm{with\,} \text{$a$$=$}132
\end{array}\right.
$$
\noindent This split in the parsing paths also allows us to encode in the 
state the actual value of the length counter, accumulating it in case it is 
longer than a single byte.
All the length-field recognition paths terminate with a transition moving to the 
state encoding the corresponding number of {\em content octets} field bytes to 
be read, upon reading the last byte of the length field, following the same
structure of the toy automaton.
Such a transition function is shown in Figure~\ref{fig:transFunction}.

\begin{figure}[!t]
	\centering
\begin{tabular}{crl}
$\phantom{M}$  & 
$
\begin{array}{c}
\forall\ a\in \term,\\ 
0 \leq i \leq 255\\ 
\end{array}
$

& 

%\begin{displaymath}
$\delta(q,a)=\left\{
\begin{array}{ll}
a                                                                & 
\mathrm{\ with\ } \text{$q$$=$$2^{32}$,\, $128$$\leq$$a$$\leq$$255$ }\\[3pt]
\text{$2^{33}$$+$$a$$+$$1$}                                      & 
\mathrm{\ with\ } \text{$q$$=$}2^{33}                                \\ 
\text{$a$$+$$(q$$-$$(2^{33}$$+$$1))$$\cdot$$2^8$}                & 
\mathrm{\ with\ } \text{$q$$=$}\delta(2^{33},i)                      \\[3pt]

\text{$2^{34}$$+$$1$$+$$a$}                                      & 
\mathrm{\ with\ } \text{$q$$=$}2^{34}                                \\ 
\text{$2^{34}$$+$$1$$+$$a$$+$$(q$$-$$2^{34}$$-$$1)$$\cdot$$2^8$} & 
\mathrm{\ with\ } \text{$q$$=$}\delta(2^{34},i)                      \\
\text{$a$$+$$(q$$-$$2^{34}$$-$$1)$$\cdot$$2^8$}                  & 
\mathrm{\ with\ } \text{$q$$=$}\delta(\delta(2^{34},i),i)            \\[3pt]

\text{$2^{35}$$+$$1$$+$$a$}                                      & 
\mathrm{\ with\ } \text{$q$$=$}2^{35}                                \\ 
\text{$2^{35}$$+$$1$$+$$a$$+$$(q$$-$$2^{35}$$-$$1)$$\cdot$$2^8$} & 
\mathrm{\ with\ } \text{$q$$=$}\delta(2^{35},i)                      \\
\text{$2^{35}$$+$$1$$+$$a$$+$$(q$$-$$2^{35}$$-$$1)$$\cdot$$2^8$} & 
\mathrm{\ with\ } \text{$q$$=$}\delta(\delta(2^{35},i),i)            \\
\text{$a$$+$$(q$$-$$2^{35}$$-$$1$$)$$\cdot$$2^8$}                & 
\mathrm{\ with\ } \text{$q$$=$}\delta(\delta(\delta(2^{35},i),i),i)  \\
\end{array}\right.$\\
%\end{displaymath}\\
\end{tabular}
\caption{Transition function for the recognizer 
$\mathcal{A}_{\mathrm{REG}-2-\mathrm{P}}$\label{fig:transFunction}}
\end{figure}

Finally, the remaining portion of the transition function takes care of counting
the \emph{content octet} field bytes while they are being read simply 
decrementing the counter encoded in the state as 
$\forall\ a$$\in$$\term,\ \delta(q,a)$$=$$q$$-$$1$ with  
$1$$\leq\,$$q\,$$\leq$$2^{32}$$-$$1$
until the only final state of the automaton $\mathbf{F}$$=$$\{0\}$ is reached.
All the unspecified transitions are leading to the error state.
This $\delta$ can be conveniently implemented in any programming language
employing a single $64$-bit unsigned integer as storage for the current state.

In order to recognize $\mathbf{L}_{\rm REG-2}$ to its full extent, we build on
$\mathcal{A}_{\rm REG-2-P}$ a machine able to take into account ASN.$1$ 
constructed types. 
Since the content of a constructed type is a list of other ASN.$1$ types, 
a nesting structure is introduced, requiring different counters depending 
on the nesting level.
In our case, it is still possible to do so, while still employing a FSA 
recognizer as the maximum nesting depth is bounded, as a consequence of the
capping imposed on the value of the outermost counter.
As a consequence, we represent the state of $\mathcal{A}_{\rm REG-2}$ as a 
vector of states of $\mathcal{A}_{\rm REG-2-P}$, employing one for each nesting 
level, and transitioning from one to the other upon recognizing the beginning
and end of a constructed type.
It is thus possible to implement the recognizer FSA for $\mathbf{L}_{\rm REG}$
taking care of performing the product among the automata 
$\mathcal{A}_{\rm REG-1}$, 
$\mathcal{A}_{\rm REG-2}$ and $\mathcal{A}_{\rm REG-3}$ employing any 
parser generator allowing to do so (e.g., multistate lexers in GNU 
Flex~\cite{flex}, or predicate grammars in 
ANTLR~\cite{DBLP:conf/pldi/ParrF11}).

\subsection{Recognizer Implementation}
Willing to implement the proposed recognizer by means of a tool generating
the actual code from a synthetic-grammar based description of the 
$\mathbf{L}_{\rm X.509-r2}$ language, we choose the formalism of 
\emph{predicated grammars}~\cite{predicated-ll} which can be employed as an 
input to the ANTLR parser generator to obtain an efficient LL(*) 
parser~\cite{DBLP:conf/pldi/ParrF11}.
The LL(*) grammars are obtained as an extension of the common subset of the 
DCF grammars known as LL(k) grammars, i.e., the ones amenable to recursive 
descent parsing without backtracking~\cite{Harrison:1978:IFL:578595}.
Our choice of an LL(*) predicated grammar is justified by the expressiveness of 
the formalism, which allows us to combine the computational
implementations of the transition functions of the length validating automaton
$\mathcal{A}_{\rm REG-2}$, together with the two CS checks and the generated 
portion of the parser via the so-called \emph{semantic predicates}.
We note that, although it is possible in principle to provide an equally 
functional description of the X.$509$ employing a GNU Flex generated parser, 
exploiting the multi-state lexer functionality to combine the different 
transition functions and through a generous use of the semantic actions, such a
description is less manageable as it would be made of a huge regular expression 
defining $\mathbf{L}_{\rm REG}$.
Moreover, LL(*) parsers inherit from LL(k) ones the capability of providing a 
useful syntax error reporting, which is useful to diagnose the nature of 
certificate faults.

Predicated grammars are defined as augmented LL(k) grammars 
adding to the usual quadruple (see Section~\ref{sec:background}) 
$G : (\mathbf{V}_t, \mathbf{V}_n, \mathbf{P}, S, \Pi, \mathbf{M})$ two sets: 
the set of side-effect free \emph{semantic predicates} $\Pi$ and the set of 
\emph{mutators} $\mathbf{M}$.
Semantic predicates are employed as a prefix to the right hand side of a 
production, so that such a production is considered during the 
corresponding recursive descent parsing action only if the predicate evaluated
on the state of the parser is true~\cite{predicated-ll}.
Mutators are the way predicated grammars formalize semantic actions acting on 
the state of the parser, which should be taken upon the matching of the right 
hand side of a rule~\cite{DBLP:conf/pldi/ParrF11}.

The parser generation process of ANTLR produces a modified recursive
descent LL parser, where in lieu of the common sets of finite-length lookaheads
the decisions are taken relying on the fact that the language of the lookaheads
(i.e., the language of the suffixes of the strings) can be split into a 
\emph{regular partition}.
This allows the LL(*) parser to employ FSAs to recognize the lookaheads, 
effectively enhancing its recognition power, although at the possible cost of
examining the entire remainder of the string for each decision to be taken.
Therefore, while this tends to happen quite rarely, it is possible for a parsing 
action to take quadratic time in the size of the input.

In case the input grammar cannot be recognized by an LL(*) automaton, ANTLR 
falls back to an exponential-time backtracking based strategy for the generated 
parser.
ANTLR also supports \emph{semantic actions} to manipulate the results of the 
parsing action, after a successful match of the right hand side of a rule, in 
the same style as GNU Flex~\cite{flex} and GNU Bison~\cite{bison}.

We implemented the recognizer described in the previous sections, employing
semantic actions to perform the CS checks required to match 
$\mathbf{L}_{\rm AId-match}$ and $\mathbf{L}_{\rm SI-match}$, introducing them 
as soon as the portion of the parse tree matched contains the required 
information.
The implementation of the $\delta$ functions for $\mathbf{A}_{\rm REG-1}$ and 
$\mathbf{A}_{\rm REG-2}$ FSAs and the product with the FSA portion of the  
LL(*) matcher automaton are performed exploiting semantic predicates for 
$\mathbf{A}_{\rm REG-2}$ with the aim to forbid the transitions which would be 
valid according to the matcher of $\mathbf{A}_{\rm REG-3}$ alone, and 
performing the uniqueness check for $\mathbf{A}_{\rm REG-1}$ in an appropriate 
semantic action, upon matching of the required portion of the parse tree.
We also employed semantic predicates to obtain an efficient implementation of
the choice of the recognition path in the parser, when we employed the
value of the OIDs to disambiguate whether an \texttt{OCTET STRING} contained
in an \texttt{extnValue} field of the \texttt{Extension} ADT is indeed 
composite or primitive as described in Section~\ref{subsec:undec}.
The same strategy was applied to disambiguate \texttt{BIT STRING} encodings 
in the {\tt subjectPublicKeyInfo} and the {\tt signatureValue} fields.
The recognizer generation process of ANTLR confirmed that no fallback to the
backtracking based parser is made, and that the grammar is indeed LL(*) and non
ambiguous.

LL(*) parsers achieve the same correctness guarantees of classical 
automatically generated parsers: that is, they recognize the same language 
generated by the grammar from which the parser was derived. 
Therefore, we can claim our parser is correct, given that the grammar 
specification is compliant with the standard~\cite{RFC5280}.
We validate the compliance of the grammar we propose by classifying the 
parsing errors reported, exploiting the accurate information obtainable with
an LL(*) recognizer, and by manually inspecting the certificate to verify
that there is actually the syntactic failure reported by our parser. 
We are able to report that no syntactic error detected by other libraries
escapes our parser on the entire corpus of $11$M X.$509$ certificates in use for 
TLS.
%
% ---------------------------------------------------------------------------- %
%
\section{Experimental evaluation}\label{sec:expres}
We conduct an experimental campaign to validate the effectiveness
of our parser against a comprehensive certificate set and to compare it to
the recognition capabilities of $7$ different TLS libraries chosen among the
most widely used in securing HTTPS transactions.
Our systematically generated parser allowed us to discover a significant amount 
of parsing errors, among which security critical issues are present.
We validate such a criticality showing how one of the parsing
flaws allows a common user owning a leaf certificate to act as a CA, effectively
forging valid and trusted certificates for any subject.
Differently from the aforementioned TLS libraries, 
our analysis does not perform any semantic check on the values of the fields 
of the considered X.$509$ certificates, as it is out of the scope of this work.

\compactpar{Experimental Settings and Libraries.} 
In our experimental campaign we generated the LL(*) recognizer employing 
the \texttt{C} backend of ANTLR~\cite{ANTLR3C} version $3.5.1$, as no backends 
for statically compiled languages are available in the most recent ANTLR v$4$.
The choice of employing the \texttt{C} backend in ANTLR was made to allow an
easier future integration of the generated parser within the most common 
programs employing TLS libraries, and to provide a parser implementation with 
good parsing performance.
\begin{figure}[!t]
\begin{center}
\subfloat[Example of two chains for certificates $\mathcal{C}_3$ and its CA 
certificate $\mathcal{C}_2$\label{fig:differential:chains}]{
\begin{tabular}{c}
\includegraphics[width=0.3\linewidth]{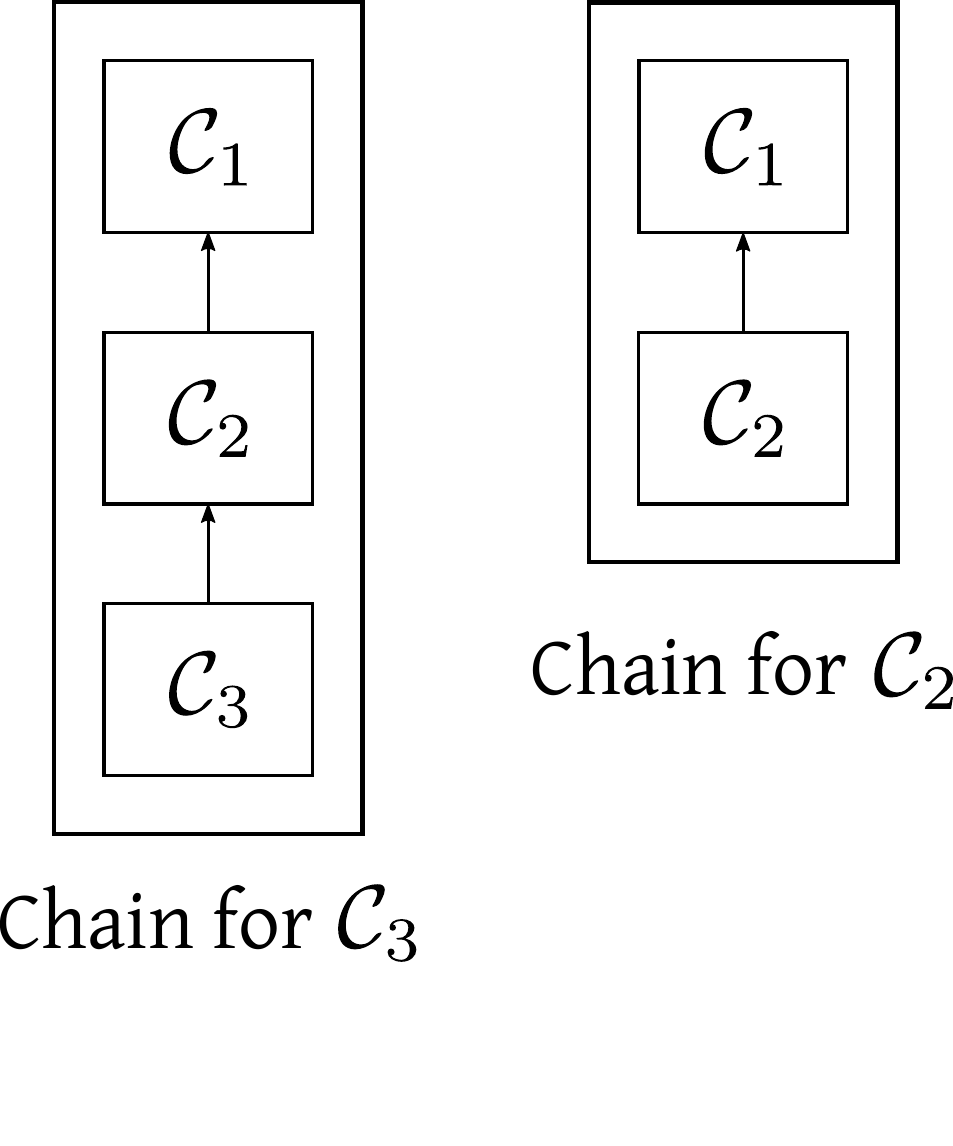}\\[-24pt]
\end{tabular}
}
\hspace{1cm}
\subfloat[Validity outcome for the differential analysis of the certificate 
$\mathcal{C}_3$, given the validity of its chain, and the one of its CA. 
{\bf Error $\mathbf{x}$} and {\bf Error $\mathbf{y}$} are two distinct errors 
among the possible ones\label{fig:differential:result}]{
\begin{tabular}{cc|ccc}
& & \multicolumn{3}{c}{$\mathcal{C}_2$ {\bf chain validity}}             
\\[3pt] 
& & {\bf Valid}  & {\bf Error  $\mathbf{x}$} & {\bf Error  $\mathbf{y}$} \\

\toprule

\parbox[t]{2mm}{\multirow{5}{*}{\rotatebox[origin=c]{90}{
$\mathcal{C}_3$ {\bf chain validity}}}} 
& \multirow{2}{*}{\bf Valid} & \multirow{2}{*}{{\em Valid}}
   & \multicolumn{1}{c}{\multirow{2}{1.5cm}{{\em Does not happen}}}
   & \multicolumn{1}{c}{\multirow{2}{1.5cm}{{\em Does not happen}}}\\[3pt]
   &  &  &  &                                                      \\[3pt]
& \multirow{2}{*}{\bf Error  $\mathbf{x}$}
& \multirow{2}{*}{\em Invalid}     
& \multirow{2}{*}{\color{blue}{\em Valid}} 
& \multirow{2}{*}{\em Invalid}                                     \\[3pt]
&                            &
                             &
                             &                                     \\[3pt]
& \multirow{2}{*}{\bf Error $\mathbf{y}$}
                             & \multirow{2}{*}{\em Invalid}
                             & \multirow{2}{*}{\em Invalid}
                             & \multirow{2}{*}{\color{blue}{\em Valid}}  \\[6pt]
& \multicolumn{4}{c}{\ }\\[12pt]
\end{tabular}
}
\caption{Differential chain analysis approach employed to prevent errors taking
place in an intermediate certificate of a chain from overshadowing others.
Sub-figure (a) reports two samples of certificate chains, the first having as 
leaf $\mathcal{C}_3$, the second one having its CA, $\mathcal{C}_2$ as leaf.
Sub-figure (b) reports a summary of the validation outcome assigned to the leaf 
certificate of a chain, according to the validity reported by a library on both 
the chain terminating with it, and the one terminating with its CA
\label{fig:differential}}
\end{center}
\end{figure}

We compare our parser against the following libraries: OpenSSL 
(v$1$.$0$.$2$g)~\cite{openSSL}, which is employed in both the Apache and NGINX 
web server SSL/TLS modules; 
BoringSSL (commit: ef7dba6ac, $2016$-$01$-$26$)~\cite{boringSSL}, the fork of 
OpenSSL re-engineered by Google and used as default TLS library in Chrome, 
Chromium and Android system binaries; 
GNUTLS (v$3.3.17$)~\cite{gnuTLS}, the Free Software Foundation TLS library 
employed in the GNOME desktop environment facilities and the Common Unix 
Printing System (CUPS); Network Security Services (NSS) 
(v$3.22.2$ Basic ECC)~\cite{nss}, the TLS implementation of the 
Mozilla Foundation employed in Firefox; 
SecureTransport (v$57031.1.35$)~\cite{secureTransport} and CryptoAPI 
(v$10.0.105$ $86.0$)~\cite{cryptoAPI}, the proprietary implementations of TLS 
by Apple and Microsoft respectively, employed for OS updates by both of them; 
and BouncyCastle (v$1.54$), the default security provider in Android apps since 
v$2.1$~\cite{DBLP:conf/ccs/EgeleBFK13}. 
These libraries cover the overwhelming majority of the employed TLS 
providers: while other ones are available, they are either employed in
more restricted application scenarios, or are obtained as a fork of one
of the analyzed libraries, without changing the X.$509$ parsing logic 
(e.g., LibreSSL, Amazon \texttt{s2n}). 
For each one of the chosen libraries we realized a minimal client able to 
process and validate an X.$509$ certificate given its certification path. 

We ran all the validators, including our parser, on a Linux Gentoo $13.0$ 
amd$64$ host based on a six-core Intel Xeon E$5$-$2603$v$3$ endowed with $32$ 
GiB DDR-$4$ DRAM, save for the CryptoAPI and SecureTransport clients, which 
were run on a Windows $10$ machine, equipped with Intel i$7$ $2600,3.4$ GHz 
$64$-bit processor and $8$ GiB DDR-$3$ DRAM, hosting a VMWare virtual 
machine running Mac OSX $10.10$ Yosemite with $5.3$ GiB RAM. 
We stored the results of the validation by all the recognizers in a MySQL 
database, together with {\tt base64} encoded DER certificates to ease the 
processing.

We choose as our dataset the one provided by the Internet-Wide Scan Data 
Repository~\cite{https13} obtained as a result of a scan of the entire IPv$4$ 
address space on TCP port $443$ on the $17$th of December $2016$, which 
encompasses $10,999,727$ X.$509$ digital certificates employed to secure HTTPS 
connections.
A statistical analysis of the certificates shows that their sizes are in 
$[0.31,\,31.5]$ kiB, with an average of $1.41$ kiB and a standard deviation of 
$0.46$ kiB.

All the validation operations were made indicating the date of the certificate 
collection as the current one to the TLS library to avoid unwanted certificate
rejections due to expiration. 
To the end of providing a fair comparison on the syntactic error 
recognition capabilities of the existing libraries on a single certificate 
we need to cope with their choices in terms of user interface.
First of all, the existing libraries provide two different procedures, one to 
parse the certificate and retrieve the information stored, and another one which 
actually performs the validation of such information (e.g., 
checks signatures, checks expiration date) and builds its certification path. 
However, according to the errors reported by the two routines, it is quite
evident that the syntactic error recognition is distributed between both of 
them.
Consequentially, to provide a comparison of the complete syntactic detection 
capabilities of the libraries, we are forced to run also some or all of their 
semantic checks on the certificate, and its certification path, employing the
validation procedure.  
To allow a comparison with our approach, we analyze the errors raised by these 
routines for each library, and we classify them in three categories: 
\begin{itemize}
\item \emph{syntactic}, which are related to a structural error of the 
certificate (e.g., \texttt{Duplicate Extensions Not Allowed} reported by Bouncy 
Castle). 
These are the relevant errors for the comparison with our parser;
\item \emph{validation}, which are related to the content of the certificate or 
to the certification path building (e.g., \texttt{Cert Expired} in GNUTLS or 
\texttt{Unable to Get Issuer Certificate} in OpenSSL). 
These are not relevant errors for the comparison with our parser, since they 
are out of its scope;
\item \emph{generic}, which are errors whose description is too loose to 
understand the issue they refer to, hardening the classification in one of the 
$2$ previous categories (e.g., the \texttt{kSecTrustResultInvalid} error code 
reported by Apple Secure Transport).
We cannot relate these errors to our parser due to this lack of information on 
their source.
\end{itemize}
\noindent We provide our classification of the errors 
reported by all the libraries in Appendix~A, Table A.1.
\begin{table}[!t]
\centering
\caption{Classification of parsing outcomes for the 
$10,999,727$ X.$509$ certificates stored in the 
{\em Internet-Wide Scan Data Repository} (U. Michigan), 
obtained from the differential analysis of $19,024,812$ 
certificate chains\label{tab:stats}}
\begin{tabular}{lcrcr}
\toprule
\multicolumn{1}{c}{\multirow{4}{*}{\bf Library}}  & 
\multirow{2}{*}{{\bf No. of Accepted}}            & 
\multicolumn{3}{c}{{\bf No. of Rejected Certificates}}                \\ 
\cmidrule{3-5}                                    & 
\multirow{2}{*}{\bf Certificates}                 &  {\bf Syntactic} 
                                                  &  {\bf Validation} 
                                                  &  {\bf Generic }   \\
                                                  &
                                                  &  {\bf Rejections} 
                                                  &  {\bf Rejections}
                                                  &  {\bf Rejections} \\ 
\midrule
OpenSSL~\cite{openSSL} ({\em OpenSSL Foundation})               &  $3,656,634$ 
& $70   $     & $7,342,978$ & $45    $ \\
BoringSSL~\cite{boringSSL} ({\em Google Inc.})                  &  $3,656,906$ 
& $70   $     & $7,342,706$ & $45    $ \\
GNUTLS~\cite{gnuTLS} ({\em Free Software Foundation})           &  $4,686,475$ 
& $4,545$     & $6,308,691$ & $16    $ \\
NSS~\cite{nss} ({\em Mozilla Foundation})                       &  $3,457,412$ 
& $33   $     & $7,520,844$ & $21,436$ \\
SecureTransport~\cite{secureTransport} ({\em Apple Inc.})       &  $3,574,491$ 
& $0    $     & $7,389,531$ & $35,705$ \\
CryptoAPI~\cite{cryptoAPI}({\em Microsoft Corporation})         &  $4,197,421$ 
& $591  $     & $6,801,687$ & $28    $ \\
BouncyCastle~\cite{bouncyCastle} ({\em Legion of Bouncy Castle})&  $4,159,585$ 
& $5,795$     & $6,723,630$ & $110,724$\\
{\bf our systematic parser}                                      &  $8,638,063$ 
& $2,361,664$ &     --      &    --    \\
\bottomrule
\end{tabular}
\end{table}
A second point of the validation process of the libraries which we must cope 
with is that they provide a single answer for the validation of an entire 
certification path. 
While this is correct in terms of compliance to the TLS standard behavior, 
where a certificate is invalid if any of the ones composing its certification
path fails validation, this behavior results in an incorrect parent certificate
shadowing with its errors the potential correctness of the certificates for 
which it acts as a CA.
In order to have a clearer insight of the extent of such a behavior we also 
report an ancillary set of results on the certificate correctness derived via a 
differential analysis strategy, of which the gist is reported in 
Figure~\ref{fig:differential}.

The main idea of such a strategy is that if an error pertains to 
a certificate in a certification path, then the error disappears if the 
certificate is no longer in the path, and this fact can be used as a detection
strategy. 
The differential analysis starts from validating all the possible certification
paths, both the ones having an actual leaf certificate as the last one 
($\mathcal{C}_3$, in Figure~\ref{fig:differential}(a)), and all the ones
having its CAs as the last certificate in the path ($\mathcal{C}_2$, 
in Figure~\ref{fig:differential}(a)).
The results are stored, and the leaf certificate is 
deemed valid or invalid according to the policy reported in 
Figure~\ref{fig:differential}(b).
In particular, a certificate is deemed invalid only if the error reported for 
the chain ending with it differs from the one reported for the chain ending with 
its CA.
In case the errors reported for a certificate and its CA during the
chain validation by a library match, the leaf certificate is deemed valid 
(see the two outcomes highlighted in blue in Figure~\ref{fig:differential}(b)), 
as the differential analysis considers the match in errors as an overshadowing
effect of the error in the CA certificate, which in turn causes an early exit
in the validation procedure.
The only precision penalty paid in this process is to deem a leaf certificate 
valid in case it is actually affected by the same error as its CAs, 
which is not expected to be a common case, 
given the significant variety of possible errors.

However, given the tight interleaving of syntactic and semantic checks in 
validation routines performed by the tested libraries, and the need to feed such 
routines with the whole certification path, we claim that no further information 
can be obtained unless a substantial re-engineering of the codebase of each 
library (when available) is performed. 
We remark that the practice of interleaving syntactic (input) validation and 
semantics, while being profitable from a performance standpoint, was identified 
as a pitfall from a security standpoint, since complex validation checks, 
which are more prone to vulnerabilities, may affect the parsing process. 
Indeed, rejecting the certificates upon parsing alone reduces the attack surface 
of the TLS/SSL libraries, as it does not expose internal calls to un-sanitized 
input.
\begin{figure}[!t]
	\centering
	\begin{tikzpicture}
	,\begin{axis}[xbar stacked,
	xmin=0,
	xmax=235,
	ytick=\empty,
	xtick={0,30,...,210,236},
	enlargelimits=0.02,
	width=0.75\textwidth,
	height=2.5cm,
	symbolic y coords = {\ },
	nodes near coords, 
	nodes near coords align=center,
	every node near coord/.append style={font=\scriptsize\boldmath,color=white},
	xlabel={No. of Certificates $\times$ $10^4$}, 
	area style,
	]
	\addplot [fill, color=cyan!95!black] coordinates 
	{(91.2,\ )};\label{plot:first}
	\addplot [fill, color=cyan!75!black] coordinates 
	{(66.3,\ )};\label{plot:second}
	\addplot [fill, color=cyan!50!black] coordinates 
	{(21.3,\ )};\label{plot:third}
	\addplot [fill, color=cyan!30!black] coordinates 
	{(20.5,\ )};\label{plot:fourth}
	\addplot [fill, color=cyan!15!black] coordinates 
	{(36.8,\ )};\label{plot:fifth}
	\end{axis}
	\end{tikzpicture}
	\caption{Syntactic errors recognized by our parser ($2,361,664$).
		The bars shows the number of certificates suffering from: 
		\ref{plot:first}  {\sc missing {\tt keyIdentifier} in not self-issued 
		cert}; 
		\ref{plot:second} {\sc Bad DNS/URI/email}; 
		\ref{plot:third}  {\sc missing {\tt subjectKeyId}}; 
		\ref{plot:fourth} {\sc {\tt keyUsage} violation on PK algorithm}
		\ref{plot:fifth} includes $41$ different kinds of errors
		\label{tab:our_parser_errors}}
	\label{fig:our_parser_errors} 
\end{figure}
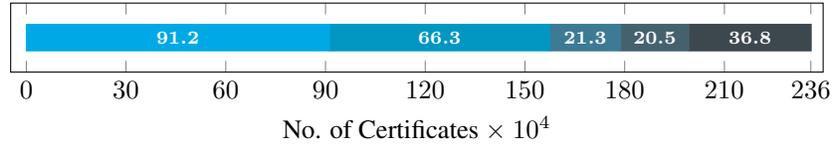

\compactpar{Health State of the X.509 Certificate Set.} 
\noindent For each one of the considered libraries, Table~\ref{tab:stats} 
reports the outcome of the differential analysis of the $19,024,812$ 
certification paths and sub-chains included in the dataset. 
We note that only $31$\%--$43$\% of the entire set is deemed valid by any one of 
the considered libraries, which in turn discard the overwhelming majority of the 
rejected certificates on the basis of validation error in their contents 
(e.g., invalid signature). 
Syntax errors are indeed composing a negligible part of the causes for rejection 
over the whole set of certificates (at most $\approx$ $0.05$\%), confirming the 
concerns on the soundness of the parsing actions of the libraries. 
We note that, due to the interleaving of the syntactic and semantic checking
in common libraries, there is no way to understand which and how many syntax
checks have been run when the semantic validation takes place. As a consequence,
for the certificates deemed semantically invalid, we cannot provide a more 
precise comparison purely between the syntax checks of our parser and the ones 
in the libraries.
A further tiny portion of the certificates (at most $\approx$ $1.6$\%) is 
rejected by the libraries with errors named as \emph{generic}.  

By contrast, our systematically generated parser recognizes $78.5$\% of 
certificates as syntactically X.$509$ abiding and, most interestingly, is able 
to discard $21.5$\% of the whole set of $10,999,727$ certificates via 
an appropriate language analysis  only.
Such a result provides strong evidence supporting a precise parsing action
before the contents of the certificate are employed to perform validation. 

Exploiting the accurate error reporting allowed by LL(*) parsers, we classified
the syntactic issues found. 
Figure~\ref{fig:our_parser_errors} reports errors found by our parser among the 
$2,361,664$ certificates deemed syntactically invalid. 
As shown, only $4$ errors are covering about $84$\% of the rejected 
certificates, that are:
{\sc missing {\tt keyIdentifier} in not self-issued cert}, 
{\sc Bad DNS/URI/email}, {\sc missing {\tt subjectKeyId}}, 
and {\sc {\tt keyUsage} violation on PK algorithm}.
\begin{table*}[!t]
	\caption{Sub-table (a) reports the outcomes for the $8,638,063$ 
	certificates 
		accepted by our parser. The number of certificates having standard 
		features 
		which are unsupported by the tested library is reported in brackets. 
		Sub-table (b) reports the outcomes for the $2,361,664$ 
		certificates rejected by our parser\label{tab:comparison}}
	\captionsetup[subfloat]{position=top}
	\centering
	\subfloat[Libraries outcomes on the set of $8,638,063$ certificates 
	accepted 
	by our 
	parser\label{tab:accepted_certs}]{
		\begin{tabular}{lccrr}
			\toprule
			\multicolumn{1}{c}{\multirow{2}{*}{\bf Library}}
			& \multicolumn{1}{c}{\bf Syntactic}  & \multicolumn{1}{c}{\bf 
			Validation} 
			& \multicolumn{1}{c}{\bf Generic} 
			& \multicolumn{1}{c}{\multirow{1}{*}{\bf Differential}}  \\
			& \multicolumn{1}{c}{\bf Rejection} & \multicolumn{1}{c}{\bf 
			Rejection} 
			& \multicolumn{1}{c}{\bf Rejection} & \multicolumn{1}{c}{\bf 
			Acceptance}  \\
			\midrule
			OpenSSL         & $0$\ {\scriptsize ($8$)}             & 
			$5,613,323$ & $   38 $ & $3,024,694$\\
			BoringSSL       & $0$\ {\scriptsize ($8$)}             & 
			$5,613,122$ & $   38 $ & $3,024,895$\\
			GNUTLS          & $0$\ {\scriptsize $\phantom{(8)}$}   & 
			$5,108,243$ & $    1 $ & $3,529,819$\\
			NSS             & $0$\ {\scriptsize ($2$)}             & 
			$6,081,980$ & $  261 $ & $2,555,820$\\
			SecureTransport & $0$\ {\scriptsize $\phantom{(8)}$}   & 
			$6,003,250$ & $ 2,509$ & $2,632,304$\\
			CryptoAPI       & $0$\ {\scriptsize ($60$)}            & 
			$5,491,470$ & $    0 $ & $3,146,533$\\
			BouncyCastle    & $0$\ {\scriptsize $\phantom{(8)}$}   & 
			$5,508,977$ & $8,341 $ & $3,120,752$\\
			\bottomrule
		\end{tabular}
	}
	\vfill
	
	\subfloat[Libraries outcomes on the set of $2,361,664$ certificates 
	rejected by our parser\label{tab:rejected_certs}]{
		\begin{tabular}{lrcrrrr}
			\toprule
			\multicolumn{1}{c}{\multirow{2}{*}{\bf Library}}
			& \multicolumn{1}{c}{\bf Syntactic} & \multicolumn{1}{c}{\bf 
			Validation} 
			& \multicolumn{1}{c}{\bf Generic} 
			& \multicolumn{1}{c}{\multirow{1}{*}{\bf Differential}} 
			& \multicolumn{1}{c}{\bf Whole Chain} & \multicolumn{1}{c}{\bf 
			Invalid} \\
			& \multicolumn{1}{c}{\bf Rejection}   & \multicolumn{1}{c}{\bf 
			Rejection} 
			& \multicolumn{1}{c}{\bf Rejection}   & \multicolumn{1}{c}{\bf 
			Acceptance}  
			& \multicolumn{1}{c}{\bf Acceptance}  & 
			\multicolumn{1}{c}{\bf Certification Paths}   \\
			\midrule
			OpenSSL          & $  62 $ & $1,729,655$ & $    7 $ & $631,940  $ & 
			$608,838$   & $1,942,174$ \\
			BoringSSL        & $  62 $ & $1,729,584$ & $    7 $ & $632,011  $ & 
			$ 608,919 $ & $1,942,251$ \\
			GNUTLS           & $4,545$ & $1,200,448$ & $   15 $ & $1,156,656$ & 
			$ 711,663 $ & $3,204,136$ \\
			NSS              & $  33 $ & $1,438,864$ & $21,175$ & $901,592  $ & 
			$ 478,681 $ & $2,758,092$ \\
			SecureTransport  & $   0 $ & $1,386,281$ & $33,196$ & $942,187  $ & 
			$ 483,291 $ & $3,398,365$ \\
			CryptoAPI        & $ 531 $ & $1,310,217$ & $   28 $ & $1,050,888$ & 
			$ 611,524 $ & $3,419,740$ \\
			BouncyCastle     & $5,795$ & $1,214,653$ & $102,383$& $1,038,833$ & 
			$ 605,552 $ & $3,412,395$ \\
			\bottomrule
		\end{tabular}
	}
\end{table*}

Among these errors, the first and the third most frequent ones are not 
security critical, since they report missing fields which are required in order 
to help certification path building, thus the worst outcome may be a chain 
building error. 
By contrast, the second and fourth most frequent errors can be a source of 
security issues.
In particular, malformed DNS, URI or email addresses may be exploited for 
malicious purposes by generating a malformed string which is 
actually deemed valid but interpreted differently by two implementations. 
Practical impersonation attacks exploiting differences in the interpretation of 
the same string among different implementations were proven a realistic threat 
in~\cite{DBLP:conf/fc/KaminskyPS10}.

Concerning the fourth most common flaw in the certificates, violations in the 
\texttt{keyUsage} policy, it allows a public key to be employed in a 
cryptographic operation the corresponding primitive is not designed for 
(e.g., a Diffie-Hellman public key employed to check digital signatures), 
with unpredictable and potentially harmful consequences.

We remark our parser identifies $45$ different errors on the dataset, 
showing a high variety of syntactical issues in the certificates 
analyzed, and a highly specialized error recognition capability. 
Among these different errors, one of them refers to the presence of an 
{\tt AlgorithmIdentifier} ADT with an unrecognized algorithm OID.
Recall that we impose this restriction on our grammar both to get rid of 
undecidable portions of the language, as well as relying only on standardized 
algorithm, which have generally undergone an higher level of scrutiny. 
We determined that such a restriction affects only $23,427$ certificates, 
that is, $0.2\%$ of the dataset.
We thus consider such a restriction acceptable in our 
implementation, especially considering that additional {\tt 
AlgorithmIdentifier}s may be added to the parser 
by an implementor who has specific needs -- e.g., maintenance reasons 
arising from the fact that the ITU or one of its authorized subsidiaries 
approved a supplementary (non-standard) algorithm identifier value.

\compactpar{Comparison of Parsing Effectiveness.} 
Table~\ref{tab:comparison}(a) and Table~\ref{tab:comparison}(b) report a 
comparative analysis on the effectiveness of our parser against the validation 
capability of the other libraries.

The results in Table~\ref{tab:comparison}(a) show that no actual syntax errors 
are detected by the tested libraries on the set of $8,638,063$ certificates 
accepted by our parser.
Indeed, a small number of syntax errors are reported by them; however, 
through manual inspection of these few certificates we confirmed that such 
errors are due to the lack of support for some features of the X.$509$ 
standard by the the said libraries. 
In particular, both OpenSSL and CryptoAPI do not support the recognition 
of some fields of \texttt{generalNames} in \texttt{NameConstraints} extension 
and have only partial support for the algorithms included in the Russian 
standard suite GOST.
Moreover, NSS does not match the OID for the MD$5$ hash 
algorithm unless explicitly forced to do so via a flag set during the 
validation process. While discarding MD$5$ signed certificates is a sound
choice from a semantic standpoint, as MD$5$ is known to be cryptographically 
weak, not matching the OID syntactically diverges from the standard 
recommendations.
The absence of syntactic errors reported by other libraries on 
certificates accepted by our parser is really meaningful, since it entails that 
there are no syntactic flaws, identified by other libraries, which are missed 
by our parser. This is a practical outcome further validating the compliance of 
our designed grammar to the X.$509$ standard.
The high number of certificates accepted by our parser and rejected by the 
libraries due to semantic validation errors confirms the sensible intuition 
that checking the semantics of a certificate content 
(e.g., via a signature check) is a crucial point in its validation process.

Table~\ref{tab:comparison}(b) shows how the tested libraries fare on analyzing
the certificates which our parser deems syntactically incorrect.
The first column of the said table shows that the portion of certificates 
recognized by the available libraries as syntactically incorrect is remarkably 
small (i.e., less than $0.25$\% of $2,361,664$ certificates), 
while a significant amount of the considered set of certificates 
(around $56$\%--$73.2$\%) are detected as invalid during the validation phase of 
the libraries (see the element-wise sum of the $2$nd and the $3$rd column in 
Table~\ref{tab:comparison}(b)).
\begin{table}[!t]
\centering
\caption{Syntactic issues on the $468,052$ certificates of the entire dataset, 
which are differentially accepted by all the tested libraries 
\label{tab:issuesDiff}}
\begin{tabular}{cccc}
\begin{tabular}{lr}
\toprule
\multicolumn{1}{c}{\bf Security Critical}        & {\bf Number} \\
\midrule
\texttt{keyCertSign} in leaf certificates        & \multirow{2}{*}{$1,369$} \\
\phantom{keyi}w/o {\tt basicConstraints}         &  \\[3pt]
\texttt{keyUsage} violation                      & \multirow{2}{*}{$87,524$}\\
\phantom{ki}on PK algorithm                      &     \\[3pt]
\texttt{keyCertSign} in leaf certificates        & $97$ \\[3pt]
Wrong string type                                & $1,289$ \\[3pt]
Char Set Violation                               & $14,155$ \\[3pt]
Bad DNS/URI/email format                         & $341,348$ \\[3pt] 
Lexing Error                                     & $15  $  \\[3pt]
Empty Issuer Distinguished Name                  & $368$   \\[3pt]
Duplicated Extension                             & $1$     \\[3pt]
Unexpected \texttt{NULL}                         & \multirow{2}{*}{$1$}   \\
in \texttt{AlgorithmIdP}                         &         \\[3pt]
OID Arc overflow                                 & $5   $  \\[3pt]
Wrong algorithm                                  & $21  $  \\[3pt]
Invalid date                                     & $122 $  \\[3pt]
\bottomrule
\\[0pt]
\end{tabular}
 & & &
\begin{tabular}{lr}
\toprule
\multicolumn{1}{c}{\bf Non Security Critical} & {\bf Number} \\
\midrule
\texttt{keyCertSign} encoding                 & $2$     \\[3pt]
 Empty value field                            & $3$     \\[3pt]
Wrong OID in Distinguished Name               & $10$        \\[3pt]
\texttt{pathLenConstraint} in not             & \multirow{2}{*}{$15$} \\
critical \texttt{basicConstraints}            &          \\[3pt]
Wrong \texttt{extnId}                         & $24$     \\[3pt]
generic error                                 & $56$     \\[3pt]
missing \texttt{subjectKeyId}                 & $61$     \\[3pt]
not critical \texttt{basicConstraints}        & $65$    \\[3pt]
wrong OID                                     & $83$     \\[3pt]
\texttt{pathLenConstraint}                    & \multirow{2}{*}{193}   \\
\phantom{pa}in leaf certificates              &    \\[3pt]
empty \texttt{generalNames}                   & $266$    \\[3pt]
empty string                                  & $401$    \\[3pt]
invalid Distinguished Name                    & $2,575$  \\[3pt]
missing \texttt{keyIdentifier}                &  \multirow{2}{*}{$17,983$}\\  
\phantom{i}in self-issued certificate	      &     \\
\bottomrule
\end{tabular}
\end{tabular}
\end{table}
These results confirm the fact that the certificate validation performed by
libraries tightly blends syntactic validation and semantic checks, instead of 
clearly splitting the two phases.
Nonetheless, even the richness of semantic checking is not sufficient for the 
tested libraries to detect all syntactically invalid certificates.
Indeed, around $26.8$\%--$44$\% of the set of $2,361,664$ syntactically 
incorrect certificates are accepted by the tested libraries 
employing the differential analysis technique described before 
($4$th column in Table~\ref{tab:comparison}(b)).
Such a result can be fruitfully interpreted comparing it with the one of 
the rejections caused on the entire certification path of a given certificate
($5$th column in Table~\ref{tab:comparison}(b)).
Indeed, we recall that differential analysis accepts a certificate 
even if the outcome of the library validation flags it as invalid, if the
reason for the rejection is the same as another certificate on its certification
path.
Consequentially, comparing the results of the whole chain acceptance with the 
differential acceptance ones points to the possibility that the effectiveness in 
rejecting syntactically incorrect certificates of existing libraries is a 
consequence of the rejection of a different, invalid certificate present in the 
same certification path of a syntactically wrong certificate.
Indeed, the only cases where the differential acceptance may underestimate the
recognition capability of the libraries are the ones where in a certification
path two certificates with the same flaw are present.
Nonetheless, even considering the maximum possible selectivity for the existing 
libraries, i.e., deeming a certification path invalid as a whole, between 
$20.2$\% and $30.3$\% of the syntactically invalid certificates are still 
deemed valid by existing libraries.

Finally, to quantify the extent of the missing rejections in the existing 
libraries we computed the set of certificates which are deemed differentially 
accepted by them, but rejected by our parser. Having derived this set of 
certificates we counted how many of the certificates in the entire certificate 
collection are either in the aforementioned set, or are issued by one of the 
subjects of the certificates of the said set (i.e., a certificate in the 
differentially accepted wrong certificates set is their ancestor in a 
certificate path).
The last column of Table~\ref{tab:comparison}(b) reports the amount of such 
certificates for each library, providing the number of certification paths
where at least one certificate is deemed invalid by us, and valid by the 
existing libraries, i.e., the amount of leaf certificates which are deemed valid
when, according to the recursive validation strategy, they shouldn't be.

\subsection{Detected Security Vulnerabilities}
\begin{table}[!t]
\centering
\caption{Syntactic issues on the $468,087$ certificate chains accepted by each 
one of the tested libraries\label{tab:issuesCorrect}}
\begin{tabular}{cccc}
\begin{tabular}{lr}
\toprule
\multicolumn{1}{c}{\bf Security Critical}        & {\bf Number} \\
\midrule
\texttt{keyUsage} violation             & \multirow{2}{*}{$83,033$}\\
\phantom{ki}on PK algorithm                      &     \\[3pt]
\texttt{keyCertSign} in leaf certificates& $1$ \\[3pt]
Wrong string type                       & $81$                 \\[3pt]
Char Set Violation                      & $12,167$              \\[3pt]
Bad DNS/URI/email format                & $366,536$ \\[3pt]
Lexing Error                            & $13  $               \\[3pt]
Invalid Date                            & $1$     \\[3pt]
\bottomrule
 & \\[28pt] 
\end{tabular}
& & &
\begin{tabular}{lr}
\toprule
\multicolumn{1}{c}{\bf Non Security Critical} & {\bf Number} \\
\midrule
Wrong OID in Distinguished Name         & $2$     \\[3pt]
Empty value field                       & $2$     \\[3pt]
empty \texttt{generalNames}             & $5$    \\[3pt]
missing \texttt{subjectKeyId}           & $6$    \\[3pt]
\texttt{pathLenConstraint} in not       & \multirow{2}{*}{$15$}    \\
critical \texttt{basicConstraints}      &          \\[3pt]
\texttt{basicConstraints} not critical  & $38$    \\[3pt]
empty string                            & $199$    \\[3pt]
invalid Distinguished Name              & $1,145$  \\[3pt]
missing \texttt{keyIdentifier}          & \multirow{2}{*}{$4,852$} \\
\phantom{i}in self-issued certificate	      &     \\
\bottomrule	
\end{tabular}\\
\end{tabular}
\end{table}
Table~\ref{tab:issuesDiff} reports the outcomes of our certificate parsing 
on the set of $468,052$ certificates deemed differentially valid by all the 
other libraries (i.e., on the intersection of the sets accounted for in the 
$4^{th}$ column of Table~\ref{tab:comparison}(b)), while the data reported in 
Table~\ref{tab:issuesCorrect} represents the outcomes of our parser on 
certificates belonging to a chain correctly validated by all the 
considered libraries (i.e., on the set of $468,052$ certificates obtained as 
the intersection of the sets accounted for in the penultimate column of 
Table~\ref{tab:comparison}(b)).
Despite the reduction in the variety of errors reported in 
Table~\ref{tab:issuesCorrect} with respect to the ones in 
Table~\ref{tab:issuesDiff}, we note that the errors appearing only in the latter
are not guaranteed to be safely detected by each library, while the ones in 
Table~\ref{tab:issuesCorrect} are definitely evading detection by all the 
libraries. 
\footnote{The outcomes of the syntactic analysis performed by our parser on 
the set of certificates accepted by the considered libraries separately, 
are reported in Appendix~A, Table~\ref{tab:OpenSSLBoringSSL} 
for OpenSSL and BoringSSL, Table~\ref{tab:GNUTLS} for GNUTLS, 
Table~\ref{tab:SecureTransport} for Secure Transport, 
Table~\ref{tab:SBouncyCastle} for Bouncy Castle, 
Table~\ref{tab:cryptoAPI} for CryptoAPI, and Table~\ref{tab:NSS} for NSS}

Both tables categorize the errors splitting them in two groups, 
according to their potential in generating
exploitable security vulnerabilities. In the following, we detail
the possible exploitation for each of these security critical errors, providing
a rationale for such criticality. 
While we deem relevant the detailed investigation of each of the highlighted
security flaws reported in this work, we provide only a single full 
proof-of-concept attack employing them, as the purpose of this work is to 
build a sound X.$509$ parser preventing all of them altogether. 
We spur the detailed investigation in this regard by releasing 
publicly our parser implementation~\cite{Implementation}.

The first and the third entry in Table~\ref{tab:issuesDiff} report issues on 
the \texttt{keyCertSign} bit being
improperly set (either missing the \texttt{BasicConstraints} extension or in a 
leaf certificate) potentially allowing their subject to act as a malicious CA.
We actually prove this attack to be feasible against some of the libraries, 
using certificates which exhibit this kind of error.
Character set violations and strings of a type different from the one expected 
may be exploited in the same way as the already discussed issues about malformed 
DNS, URI or email addresses. 
Therefore, these issues may lead to impersonation attacks arising from different 
interpretations of the same string. 
Lexing errors state an incorrect DER encoding of ASN.$1$ structures which may 
lead to a variety of attacks depending on the type of violation. 
Examining more closely some of them, we discovered that some implementations 
consider null character of byte strings such as \texttt{0x02 0x01 0x03 0x00} 
in the DER encoded INTEGER (tag $2$) of length $1$ with value $3$, 
which suggest such implementations may be employing the outcome of the 
\texttt{strlen} C library function with the value of the actual length field, or 
employ C string based input reading functions. 
More serious issues such as flawed checks on length fields may lead to forgery 
attacks such as the ones reported in~\cite{BERserk-NSSattack,BERserk-Sketch}. 
The errors named as ``Empty Issuer Distinguished Names'' imply that a 
certificate has no issuer, which may lead to security issues depending on how 
the issuer of the certificate is retrieved by the implementation.
The presence of duplicated extensions may lead to impersonation attacks, since
implementations may employ the contents of one duplicate only, at their choice. 
A critical case for such a behavior is the one of a certificate with two 
\texttt{basicConstraints} extensions, one with \texttt{ca} flag set to 
\texttt{true} while the other one set to \texttt{false}. 
In such a case, it depends on an implementation-related choice whether to take 
into account only at the first extension, and thus considering the subject a CA, 
or to consider the second extension valid deeming the subject an end entity. 
Such misinterpretation may lead to powerful attacks where an end entity acts as 
a malicious CA.
\begin{table}[!t]
\centering
\caption{Number of hosts, distinguished via their names, for which 
a certificate deemed incorrect by our parser is accepted by a library, 
considering the result of differential analysis \label{tab:names}}
\begin{tabular}{lrr}
\toprule
Library & \# Certificates & \# Names Affected \\
\midrule
OpenSSL          & $631,940$   & $2,143,635$ \\
BoringSSL        & $632,011$   & $2,143,812$ \\
SecureTransport  & $942,187$   & $3,890,999$ \\
CryptoAPI        & $1,050,888$ & $4,253,895$ \\
GNUTLS           & $1,156,656$ & $4,543,605$ \\
NSS              & $901,592$   & $3,786,986$ \\
BouncyCastle     & $1,038,833$ & $4,196,435$ \\
\bottomrule
\end{tabular}
\end{table}
Unexpected \texttt{NULL} in \texttt{algorithmP} means that there are no 
parameters for an algorithm, even if they are expected, potentially weakening 
the security guarantees provided by some primitives (e.g., missing elliptic 
curve parameters in ECDSA may lead to backtracking to an unsafe default choice).
OID \emph{arc} overflows indicate that a single arc of an OID is indeed 
exceeding the maximum value set by the standard. 
While mishandling of arcs due to a short integer representation is known 
to lead to practical attacks in flawed 
implementations~\cite{DBLP:conf/fc/KaminskyPS10}, we note that an OID arc 
overflow is potentially more dangerous as even standard abiding libraries will 
likely fail to manage it.
Wrong Algorithm errors imply that a non standard algorithm is employed. 
While the employed algorithm may still be a sound one, we note that standardized 
algorithms have usually undergone a higher level of scrutiny.
Finally, we report that some certificates have non-existing dates in their 
validity specification (e.g., the $29$th of February $2022$), potentially 
resulting in an erroneous expiration check. 

\subsection{Analysis of the Certificate Statuses on Distinct Hosts}

Following the quantitative analysis on how many certificates are affected
by potentially security threatening syntactic flaws, we want to analyze 
the effect of such flaws when reflected onto the hosts which are 
employing the said certificates.
To this end, we analyzed the contents of the {\tt Name} ADT of the 
certificates, which usually contains one or more URLs of the host for 
which the certificate is valid.
Moreover, there is a particular standardized extension, called {\tt 
Subject Alternative Names}, which contains a further set of names (we consider
DNS, email addresses, URLs and URIs) which are related to the subject of the 
certificate.
To evaluate the practical impact of flaws in the certificates we report in 
Table~\ref{tab:names} the amount of names bound to flawed certificates
which are not detected by a library, but deemed incorrect by our parser.
As emerging from the data, on average $4$ common names are present for each 
certificate, thus pointing to a four-fold increase in the number of impacted
hosts with respect to the rough estimate which can be provided considering
each certificate as used by a single host.

\subsection{Parsing Vulnerability Exploitation}
Willing to validate the practical exploitation of the security issues emerged
we focus on the syntactic problem of a certificate having no 
\texttt{BasicConstraints} (BC) extension, while having the \texttt{keyCertSign}
bit set in the \texttt{KeyUsage} extension (there are $1,369$ such instances
as reported in Table~\ref{tab:issuesDiff}). Refer to 
Figure~\ref{fig:attack ASN.1} for the detailed structure of these 
two extensions.

The \texttt{BasicConstraints} extension contains a single boolean field 
indicating whether the subject is a CA or not, and an optional constraint on the
maximum length of the certification path.
The \texttt{KeyUsage} extension is a $9$-bit string storing flags which indicate 
the legitimate uses for the public key of the certificate subject.
Semantically, the \texttt{keyCertSign} flag allows 
the public key of the subject to be used to verify certificate signatures, 
but the subject must be a CA. 
Such an information is contained in the \texttt{cA} Boolean field of BC, 
which has a default value of \texttt{FALSE}.
Thus, if the \texttt{BasicConstraints} extension is missing, the subject cannot 
be considered a CA and the certificate must be rejected.
An incorrect behavior in this case was reported in the recent OpenSSL bug 
report~\cite{OpenSSLCABug}: indeed OpenSSL versions prior to $1.0.2$d allowed
such syntactically flawed certificates to be used as CA ones. 
The version of OpenSSL employed in our experimental evaluation
includes the bug fix reported in~\cite{OpenSSLCABug}; however such a fix is not
addressing the root cause of the issue, as a number of syntactically
incorrect certificates are still deemed valid. 
\newbox\listboxb
\begin{lrbox}{\listboxb}
\begin{lstlisting}[language=asn.1]
KeyUsage ::= BIT STRING {
        digitalSignature        (0),
        nonRepudiation          (1),
        keyEncipherment         (2),
        dataEncipherment        (3),
        keyAgreement            (4),
        keyCertSign             (5),
        cRLSign                 (6),
        encipherOnly            (7),
        decipherOnly            (8) }

BasicConstraints ::= SEQUENCE {
        cA                      BOOLEAN DEFAULT FALSE,
        pathLenConstraint       INTEGER (0..MAX) OPTIONAL }
\end{lstlisting}
\end{lrbox}

\begin{figure}[!t]
	\centering
\usebox\listboxb
\caption{\texttt{KeyUsage and \texttt{BasicConstraints} extensions}}
\label{fig:attack ASN.1}
\end{figure} 
We reproduced the issue at hand in a certificate of which we own the private 
key, employing the said private key to sign a forged certificate for 
\texttt{paypal.com}.
The syntactically flawed certificate was signed with a private key of a CA
bootstrapped by us, and such a CA certificate was added to the trusted storages
of the verifying clients, completing the reproduction of the situation found 
in the wild.
We provided the certification path of our forged \texttt{paypal.com} certificate
to OpenSSL both via the programming API, and via command-line client, employing 
the default validation options, succeeding
in getting it accepted as a valid certificate. An interesting remark is that 
such attack is mitigated if the \texttt{x509\_strict} option is set for the
validation algorithm. However, this option is not enabled by default, 
leaving all entities relying on default OpenSSL settings (which are expected to 
be the majority) vulnerable to this serious attack.
We were able to get the said certification path to be accepted also by 
BoringSSL, while the other libraries discard the chain.
Nonetheless, we confirm the practicality of the threat on two of the most used 
TLS implementations, including the one employed by Chrome/Chromium.
As a consequence, any one among the owners of the $1,369$ certificates of the 
dataset with the said vulnerability are likely to be able to exploit it 
successfully, as such items have been signed by a legitimate and trusted 
root CA. 
An inspection of such certificates show that some have been generated with 
the Open Directory framework of Mac OS X Server, providing a pointer 
to a real world TLS implementation generating syntactically incorrect 
certificates. Such implementations can be exploited by an attacker who could 
require a legitimate certificate for its own identity to such implementations. 
The generated certificate would suffer the mentioned syntactic issue and 
could be used as an intermediate certificate to sign other certificates for an 
arbitrary identity. Such certificates can later be used to perform a 
Man-In-The-Middle attack where a TLS session is successfully established 
impersonating the subject of the fake certificate.

\subsection{Performance Analysis}
\begin{figure}[!h]
	\centering
\includegraphics{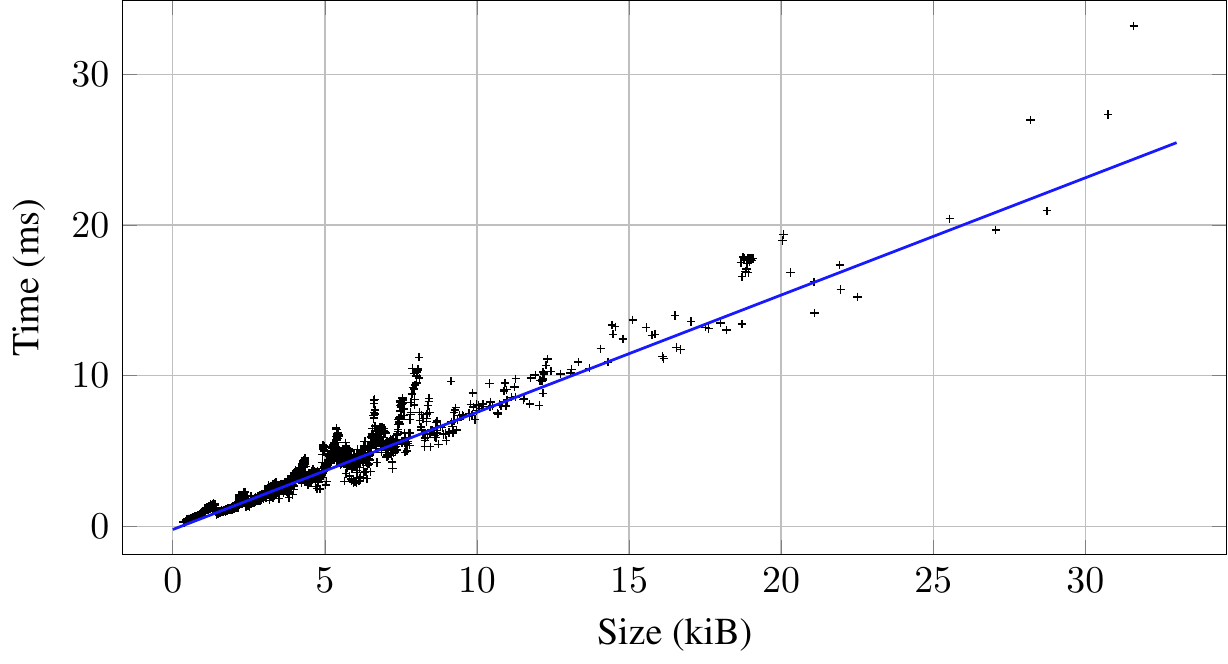}
\caption{Parsing times for all the certificates present in our
dataset. The blue line marks the linear interpolation of the obtained data 
points, with equation $\mathtt{Time}$$=$$0.778$$\cdot$$\mathtt{Size}$$-$$0.203$
\label{fig:performance}}
\end{figure}
Finally, although our work focuses on the effectiveness of the proposed 
certificate parsing approach, we provide some efficiency results of our parser. 
The performance results obtained running our parser on a $3.2$GHz Intel 
i$5$-$6500$ based desktop, running Gentoo Linux $13.0$ (x$86$\_$64$), are 
reported in Figure~\ref{fig:performance}.
The reported timings show how the vast majority of the certificates can be 
recognized in less than $10$ ms, a timing we deem acceptable for practical use, 
and which may be improved employing a more efficient C code generation backend
with respect to the current one in ANTLR$3$.
In addition to the absolute timings reported, it is interesting to note that
the practical parsing complexity appears to be substantially linear, despite
the theoretical quadratic worst case of LL(*) parsers.
Such a fact represents a validation of our claim that the lookahead which the 
LL(*) parser requires is indeed far shorter than the entirety of the remaining
input, indeed resulting in near optimal (i.e., linear) parsing complexity.
%
% ---------------------------------------------------------------------------- %
%
\section{Concluding Remarks}\label{sec:concl}
We presented a systematic approach to the parsing of 
X.$509$ digital certificates, analyzing the standard and providing a description
of the ADTs to be matched in terms of a predicate grammar.
We generated systematically a parser from the given grammar representation, and 
analyzed with it $11$M X.$509$ certificates in use to secure TLS connections.
We report that $21.5$\% are syntactically incorrect and $7$ of the most common
TLS libraries deem authentic $5.7$\%--$10.5$\% of them.
We provided a validation of the practicality of the threat represented by 
mis-parsed certificates, providing a proof of concept of an effective 
impersonation attack stemming from them.
We hope this work will encourage the integration of our systematically generated
parser in widespread TLS libraries, although it is also possible to employ it 
as a stopgap measure in programs dynamically linked to existing libraries,
including it within a wrapper to their API calls. 
Although we provided a single proof-of-concept exploit of a security
vulnerability among the ones we identified, we believe exploring all the 
remaining ones will provide a concrete evaluation of the extent of the current 
issues in X.$509$ certificates.
%
% ---------------------------------------------------------------------------- %
%
\newpage
\bibliographystyle{plainnat}
\bibliography{biblio}
\newpage 
\section*{Appendix A}
\label{sec:appendix}
\setcounter{table}{0}
\renewcommand{\thetable}{A.\arabic{table}}
\begin{table}[!hb]
\caption{Classification of the outcomes of the certificate chain validation 
procedure for each one of the tested TLS libraries as syntactic, validation or 
generic errors\label{table:allErrors}}

{\small
\begin{tabular}{ccc}

\begin{tabular}{cp{0.07\linewidth}p{0.30\linewidth}}
\toprule
\parbox[t]{2mm}{\multirow{6}{*}{\rotatebox[origin=c]{90}{
{\bf SecureTransport$\,$}}}} 
       & Syntactic    & n/a                                           \\[3pt] 
       & Validation   & {\tt kSecTrustResult} deny                     \\
       & Validation   & {\tt kSecTrustResult} recoverable trust failure  \\
       & Validation   & {\tt kSecTrustResult} fatal trust failure        \\[3pt]
       & Generic      & {\tt kSecTrustResult} other error               \\
       & Generic      & {\tt kSecTrustResult} invalid                  \\
\midrule 
\parbox[t]{2mm}{\multirow{7}{*}{\rotatebox[origin=c]{90}{{\bf GNUTLS}}}}
       &  Syntactic   & Insecure Algorithm                            \\[3pt]
       &  Validation  & Certificate signature failure                 \\
       &  Validation  & Certificate signer not CA                     \\
       &  Validation  & Certificate expired                           \\
       &  Validation  & Certificate signer not found                  \\
       &  Validation  & Invalid Certificate                           \\[3pt]
       &  Generic     & Signer constraints failure                    \\

\midrule
\parbox[t]{2mm}{\multirow{15}{*}{\rotatebox[origin=c]{90}{
{\bf OpenSSL, BoringSSL}}}}
       & Syntactic   & Unsupported constraint type                  \\
       & Syntactic   & No explicit policy                           \\[3pt]
       & Validation  & Path length exceeded                         \\
       & Validation  & Permitted violation                          \\
       & Validation  & Certificate not yet valid                    \\
       & Validation  & Unhanded critical extension                  \\
       & Validation  & Unable to get issuer certificate             \\
       & Validation  & Self-signed certificate in chain             \\
       & Validation  & Certificate signature failure                \\
       & Validation  & Certificate has expired                      \\
       & Validation  & Unable to get issuer certificate locally     \\
       & Validation  & Depth zero self-signed certificate           \\[3pt]
       & Generic     & Invalid CA                                   \\
       & Generic     & Invalid policy extension                     \\
       & Generic     & Unable to decode issuer public key           \\
\midrule 
\parbox[t]{2mm}{\multirow{10}{*}{\rotatebox[origin=c]{90}{{\bf NSS}}}} 
       & Syntactic    & Signature algorithm disabled                \\[3pt] 
       & Validation   & Untrusted certificate                       \\
       & Validation   & Bad signature                               \\
       & Validation   & Certificate not in namespace                \\
       & Validation   & Untrusted issuer                            \\
       & Validation   & Expired certificate                         \\
       & Validation   & Unknown issuer                              \\[3pt]
       & Generic      & Invalid arguments                           \\
       & Generic      & Inadequate key usage                        \\
       & Generic      & Inadequate certificate type                 \\
\bottomrule
\end{tabular} 

 & &

\begin{tabular}{lp{0.07\linewidth}p{0.30\linewidth}}
\toprule
\parbox[t]{2mm}{\multirow{14}{*}{\rotatebox[origin=c]{90}{{\bf CryptoAPI}}}}
      & Syntactic    & Not supported name constraint                       \\
      & Syntactic    & NTE bad algorithm identifier                        \\
      & Syntactic    & Invalid date                                      \\[3pt]
      & Validation   & Forbidden name constraint                           \\
      & Validation   & Revoked                                             \\
      & Validation   & Off-line revocation                                 \\
      & Validation   & Weak signature                                      \\
      & Validation   & Signature not valid                                 \\
      & Validation   & Time not valid                                      \\
      & Validation   & Partial chain                                       \\
      & Validation   & Untrusted root                                    \\[3pt]
      & Generic      & Invalid arguments                                   \\
      & Generic      & Inadequate key usage                                \\
      & Generic      & Inadequate certificate type                         \\
\midrule
\parbox[t]{2mm}{\multirow{15}{*}{\rotatebox[origin=c]{90}{
{\bf BouncyCastle$\qquad \qquad \qquad$}}}}
   & Syntactic    & Forbidden extension in v$2$ certificates             \\
   & Syntactic    & Empty {\tt issuerDN}                                     \\
   & Syntactic    & URI must include scheme                                  \\
   & Syntactic    & Empty {\tt subjectDN} is not allowed in v$1$ certificates\\
   & Syntactic    & Invalid URI name                                         \\
   & Syntactic    & Duplicate extensions are not allowed                     \\
   & Syntactic    & Signature algorithm mismatch                             \\
   & Syntactic    & No more data allowed for v$1$ certificate                \\
   & Syntactic    & Incomplete X.$509$ certificate: empty subject field, 
                    and absent {\tt SubjectAlternativeName} extension\\
   & Syntactic    & Incomplete X.$509$ certificate: {\tt SubjectAlternativeName} 
                    extension must be marked as critical, when subject
                    field is empty\\[3pt]
   & Validation   & {\tt targetConstraints} mismatch                         \\
   & Validation   & No issuer found in certification path                    \\
   & Validation   & Unable to find certificate chain                     \\[3pt]
   & Generic      & certificate validation failed                            \\
   & Generic      & Certification path could not be validated                \\
\bottomrule
  & & \\
  & & \\
  & & \\
  & & \\[15pt]
\end{tabular}  
\end{tabular}
}
\end{table}
\begin{table}[!t]
\centering
\caption{Syntactic issues on the $608,838$(resp. $608,919$) 
certificates of the entire dataset differentially accepted by 
OpenSSL(resp. BoringSSL)\label{tab:OpenSSLBoringSSL} }
\begin{tabular}{cc}
\begin{tabular}{lrr}
\toprule
\multicolumn{1}{c}{\bf Security Critical}  & {\bf OpenSSL} & {\bf BoringSSL}\\
\midrule
\texttt{keyUsage} violation  & \multirow{2}{*}{$126,174$} 
                             & \multirow{2}{*}{$126,182$}\\
on PK algorithm              &          &    \\[3pt]
\texttt{keyCertSign} in      & \multirow{2}{*}{$17$} 
                             & \multirow{2}{*}{$10$} \\[3pt]
leaf certificates            &          & \\[3pt]
Wrong string type            & $140$    & $139$ \\[3pt]
Char Set Violation           & $13,913$ & $13,925$ \\[3pt]
Bad DNS/URI/email            & \multirow{2}{*}{$455,539$} 
                             & \multirow{2}{*}{$455,536$} \\[3pt] 
format                       &          & \\[3pt]
Lexing Error                 & $57$     & $57$ \\[3pt]
Wrong algorithm              & $1$      & $11$ \\[3pt]
Extension found but          & \multirow{2}{*}{$1$}
                             & \multirow{2}{*}{$0$}  \\[3pt]
version $ \ne 3$             &          & \\[3pt]
\bottomrule
\\[145pt]
\end{tabular}
 & 
\begin{tabular}{lrr}
\toprule
\multicolumn{1}{c}{\bf Non Security Critical} & {\bf OpenSSL} 
                                              & {\bf BoringSSL}\\
\midrule
Empty value field                             & $2$      & $3$        \\[3pt]
Wrong OID in Distinguished                    & \multirow{2}{*}{$3$}
                                              & \multirow{2}{*}{$2$}  \\[3pt]
Name                                          & & \\[3pt] 
\texttt{pathLenConstraint} in                 & \multirow{2}{*}{$10$} 
                                              & \multirow{2}{*}{$11$} \\
not critical \texttt{basicConstraints}        &          &            \\[3pt]
generic error                                 & $4,909$  & $4,909$    \\[3pt]
missing \texttt{subjectKeyId}                 & $59$  & $80$   \\[3pt]
not critical \texttt{basicConstraints}        & $65$  & $74$   \\[3pt]
wrong OID                                     & $88$  & $89$   \\[3pt]
\texttt{pathLenConstraint}                    & \multirow{2}{*}{$27$} 
                                              & \multirow{2}{*}{$25$}  \\
in leaf certificates                          &         &         \\[3pt]
empty \texttt{generalNames}                   & $14$    & $16$    \\[3pt]
empty string                                  & $278$   & $277$   \\[3pt]
invalid Distinguished Name                    & $1,368$ & $1,370$ \\[3pt]
missing \texttt{keyIdentifier}                &  \multirow{2}{*}{$6,168$} 
                                              & \multirow{2}{*}{$6,196$}\\  
in self-issued certificate	      &       &     \\[3pt]
Bad {\tt BIT STRING} encoding         & $1$   & $1$  \\[3pt]
Redundant Trailing Bytes              & $2$   & $2$  \\[3pt]
not critical \texttt{basicConstraints} & \multirow{2}{*}{$1$} & 
\multirow{2}{*}{$3$}\\			
and missing \texttt{subjectKeyId} &    &   \\[3pt]
Empty sequence in              & \multirow{3}{*}{$1$} & \multirow{3}{*}{$1$} \\
{\tt Authority/Subject}        & & \\
{\tt  Information Access} exts & & \\
\bottomrule
\end{tabular}
\end{tabular}
\end{table}

\begin{table}[!t]
\centering
\caption{Syntactic issues on the $711,663$ certificates of the entire dataset, 
which are differentially accepted by GNUTLS\label{tab:GNUTLS} }
\begin{tabular}{cccc}
\begin{tabular}{lr}
\toprule
\multicolumn{1}{c}{\bf Security Critical}        & {\bf Number} \\
\midrule
\texttt{keyUsage} violation                      & \multirow{2}{*}{$151,727$}\\
on PK algorithm                                  &     \\[3pt]
\texttt{keyCertSign} in leaf certificates        & $20$ \\[3pt]
Wrong string type                                & $397$ \\[3pt]
Char Set Violation                               & $16,637$ \\[3pt]
Bad DNS/URI/email format                         & $514,645$ \\[3pt] 
Lexing Error                                     & $14$       \\[3pt]
Wrong algorithm                                  & $2$      \\[3pt]
Duplicated Extensions                            & $2$      \\[3pt]
\bottomrule
\\[166.5pt]
\end{tabular}
& & &
\begin{tabular}{lr}
\toprule
\multicolumn{1}{c}{\bf Non Security Critical} & {\bf Number} \\
\midrule
Empty value field                            & $2$     \\[3pt]
Wrong OID in Distinguished Name               & $3$        \\[3pt]
\texttt{pathLenConstraint} in not             & \multirow{2}{*}{$12$} \\
critical \texttt{basicConstraints}            &          \\[3pt]
generic error                                 & $4,907$     \\[3pt]
missing \texttt{subjectKeyId}                 & $52$     \\[3pt]
not critical \texttt{basicConstraints}        & $57$    \\[3pt]
wrong OID                                     & $96$     \\[3pt]
\texttt{pathLenConstraint}                    & \multirow{2}{*}{$124$}   \\
in leaf certificates                          &    \\[3pt]
empty \texttt{generalNames}                   & $12$    \\[3pt]
empty string                                  & $487$    \\[3pt]
invalid Distinguished Name                    & $1,638$  \\[3pt]
missing \texttt{keyIdentifier}                &  \multirow{2}{*}{$20,825$}\\  
in self-issued certificate                    &     \\[3pt]
Redundant Trailing Bytes                      & $1$ \\[3pt]
not critical \texttt{basicConstraints}        & \multirow{2}{*}{$1$} \\			
and missing \texttt{subjectKeyId}             &    \\[3pt]
Empty sequence in {\tt Authority/Subject}     & \multirow{2}{*}{$1$} \\
{\tt Information Access} extensions           & \\[3pt]
Wrong \texttt{extnId}                         & $1$   \\[3pt] 	
\bottomrule
\end{tabular}
\end{tabular}
\end{table}

\begin{table}[!t]
\centering
\caption{Syntactic issues on the $483,291$ certificates of the entire dataset, 
which are differentially accepted by Secure Transport
\label{tab:SecureTransport}}
\begin{tabular}{cccc}
\begin{tabular}{lr}
\toprule
\multicolumn{1}{c}{\bf Security Critical}        & {\bf Number} \\
\midrule
\texttt{keyUsage} violation                      & \multirow{2}{*}{$86,276$}\\
on PK algorithm                                  &     \\[3pt]
\texttt{keyCertSign} in leaf certificates        & $4$ \\[3pt]
Wrong string type                                & $84$ \\[3pt]
Char Set Violation                               & $12,558$ \\[3pt]
Bad DNS/URI/email format                         & $377,788	$ \\[3pt] 
Lexing Error                                     & $14$       \\[3pt]
Wrong algorithm                                  & $1$      \\[3pt]
\bottomrule
\\[77pt]
\end{tabular}
& & &
\begin{tabular}{lr}
\toprule
\multicolumn{1}{c}{\bf Non Security Critical} & {\bf Number} \\
\midrule
Empty value field                             & $2$     \\[3pt]
Wrong OID in Distinguished Name               & $2$        \\[3pt]
\texttt{pathLenConstraint} in not             & \multirow{2}{*}{$8$} \\
critical \texttt{basicConstraints}            &          \\[3pt]
missing \texttt{subjectKeyId}                 & $31$     \\[3pt]
not critical \texttt{basicConstraints}        & $85$    \\[3pt]
empty \texttt{generalNames}                   & $5$    \\[3pt]
empty string                                  & $207$    \\[3pt]
invalid Distinguished Name                    & $1,163$  \\[3pt]
missing \texttt{keyIdentifier}                &  \multirow{2}{*}{$5,060$}\\  
in self-issued certificate                    &     \\[3pt]
Redundant Trailing Bytes                      & $2$ \\[3pt]
Empty sequence in {\tt Authority/Subject}     & \multirow{2}{*}{$1$} \\
{\tt Information Access} extensions           &   \\[3pt]
\bottomrule
\end{tabular}
\end{tabular}
\end{table}

\begin{table}[!t]
\centering
\caption{Syntactic issues on the $608,038$ certificates of the entire dataset, 
which are differentially accepted by Bouncy Castle\label{tab:SBouncyCastle} }
\begin{tabular}{cccc}
\begin{tabular}{lr}
\toprule
\multicolumn{1}{c}{\bf Security Critical}        & {\bf Number} \\
\midrule
\texttt{keyUsage} violation                      & \multirow{2}{*}{$126,232$}\\
on PK algorithm                                  &     \\[3pt]
\texttt{keyCertSign} in leaf certificates        & $3$ \\[3pt]
Wrong string type                                & $137$ \\[3pt]
Char Set Violation                               & $13,710$ \\[3pt]
Bad DNS/URI/email format                         & $452,630	$ \\[3pt] 
Lexing Error                                     & $13$       \\[3pt]
Wrong algorithm                                  & $1$      \\[3pt]
\bottomrule
\\[138.5pt]
\end{tabular}
& & &
\begin{tabular}{lr}
\toprule
\multicolumn{1}{c}{\bf Non Security Critical} & {\bf Number} \\
\midrule
Empty value field                             & $2$     \\[3pt]
Wrong OID in Distinguished Name               & $2$        \\[3pt]
\texttt{pathLenConstraint} in not             & \multirow{2}{*}{$8$} \\
critical \texttt{basicConstraints}            &          \\[3pt]
generic error                                 & $4,907$     \\[3pt]
missing \texttt{subjectKeyId}                 & $25$     \\[3pt]
not critical \texttt{basicConstraints}        & $57$    \\[3pt]
wrong OID                                     & $88$     \\[3pt]
\texttt{pathLenConstraint}                    & \multirow{2}{*}{$26$}   \\
in leaf certificates                          &    \\[3pt]
empty \texttt{generalNames}                   & $6$    \\[3pt]
empty string                                  & $274$    \\[3pt]
invalid Distinguished Name                    & $1,355$  \\[3pt]
missing \texttt{keyIdentifier}                &  \multirow{2}{*}{$6,073$}\\  
in self-issued certificate	              &     \\[3pt]
Redundant Trailing Bytes                      & $2$                  \\[3pt]
Empty sequence in {\tt Authority/Subject}     & \multirow{2}{*}{$1$} \\
{\tt Information Access} extensions           &  \\[3pt]
\bottomrule
\end{tabular}
\end{tabular}
\end{table}

\begin{table}[!t]
\centering
\caption{Syntactic issues on the $611,524$ certificates of the entire dataset, 
which are differentially accepted by CryptoAPI\label{tab:cryptoAPI} }
\begin{tabular}{ccc}
\begin{tabular}{lr}
\toprule
\multicolumn{1}{c}{\bf Security Critical}        & {\bf Number} \\
\midrule
\texttt{keyUsage} violation                      & \multirow{2}{*}{$121,686$}\\
on PK algorithm                                  &     \\[3pt]
\texttt{keyCertSign} in leaf certificates        & $10$ \\[3pt]
Wrong string type                                & $194$ \\[3pt]
Char Set Violation                               & $14,105$ \\[3pt]
Bad DNS/URI/email format                         & $462,238$ \\[3pt] 
Lexing Error                                     & $14$       \\[3pt]
Wrong algorithm                                  & $1$      \\[3pt]
Wrong {\tt keyCertSign} ASN.$1$ enc.             & $1$      \\[3pt]
\bottomrule
\\[122pt]
\end{tabular}
& & 
\begin{tabular}{lr}
\toprule
\multicolumn{1}{c}{\bf Non Security Critical} & {\bf Number} \\
\midrule
Empty value field                             & $3$     \\[3pt]
Wrong OID in Distinguished Name               & $2$        \\[3pt]
\texttt{pathLenConstraint} in not             & \multirow{2}{*}{$12$} \\
critical \texttt{basicConstraints}            &          \\[3pt]
generic error                                 & $4,907$     \\[3pt]
missing \texttt{subjectKeyId}                 & $23$     \\[3pt]
not critical \texttt{basicConstraints}        & $94$    \\[3pt]
wrong OID                                     & $105$     \\[3pt]
\texttt{pathLenConstraint}                    & \multirow{2}{*}{$29$}   \\
in leaf certificates                          &    \\[3pt]
empty \texttt{generalNames}                   & $7$    \\[3pt]
empty string                                  & $523$    \\[3pt]
invalid Distinguished Name                    & $1,421$  \\[3pt]
missing \texttt{keyIdentifier}                &  \multirow{2}{*}{$6,146$}\\  
in self-issued certificate	              &     \\[3pt]
Redundant Trailing Bytes                      & $2$       \\[3pt]
Empty sequence in {\tt Authority/Subject}     & \multirow{2}{*}{$1$} \\
{\tt Information Access} extensions &           \\[3pt]
\bottomrule
\end{tabular}
\end{tabular}
\end{table}

\begin{table}[!t]
\centering
\caption{Syntactic issues on the $478,681$ certificates of the entire dataset, 
which are differentially accepted by NSS\label{tab:NSS} }
\begin{tabular}{cccc}
\begin{tabular}{lr}
\toprule
\multicolumn{1}{c}{\bf Security Critical}        & {\bf Number} \\
\midrule
\texttt{keyUsage} violation                      & \multirow{2}{*}{$87,187$}\\
on PK algorithm                                  &     \\[3pt]
\texttt{keyCertSign} in leaf certificates        & $2$ \\[3pt]
Wrong string type                                & $98$ \\[3pt]
Char Set Violation                               & $12,832$ \\[3pt]
Bad DNS/URI/email format                         & $372,132$ \\[3pt] 
Lexing Error                                     & $14$       \\[3pt]
Wrong algorithm                                  & $1$      \\[3pt]
\bottomrule
\\[92.5pt]
\end{tabular}
& & &
\begin{tabular}{lr}
\toprule
\multicolumn{1}{c}{\bf Non Security Critical} & {\bf Number} \\
\midrule
Empty value field                             & $2$     \\[3pt]
Wrong OID in Distinguished Name               & $2$        \\[3pt]
\texttt{pathLenConstraint} in not             & \multirow{2}{*}{$7$} \\
critical \texttt{basicConstraints}            &          \\[3pt]
missing \texttt{subjectKeyId}                 & $20$     \\[3pt]
not critical \texttt{basicConstraints}        & $54$    \\[3pt]
wrong OID                                     & $62$     \\[3pt]
\texttt{pathLenConstraint}                    & \multirow{2}{*}{$1$}   \\
in leaf certificates                          &              \\[3pt]
empty \texttt{generalNames}                   & $6$      \\[3pt]
empty string                                  & $202$    \\[3pt]
invalid Distinguished Name                    & $1,177$  \\[3pt]
missing \texttt{keyIdentifier}                &  \multirow{2}{*}{$4,881$}\\  
in self-issued certificate	              &     \\[3pt]
Redundant Trailing Bytes                      & $1$ \\[3pt]
\bottomrule
\end{tabular}
\end{tabular}
\end{table}
\end{document}